\documentclass[11pt]{article}

\usepackage[margin=1in]{geometry}
\usepackage{amsmath,amssymb,amsthm}
\usepackage[T1]{fontenc}
\usepackage[utf8]{inputenc}
\usepackage{lmodern}
\usepackage{microtype}
\usepackage{hyperref}
\hypersetup{colorlinks=true,linkcolor=blue,citecolor=blue,urlcolor=blue}
\title{Tensor Spectral Threshold is $\exists\mathbb{R}$-Hard}
\author{
Angshul Majumdar\\
Indraprastha Institute of Information Technology Delhi\\
\texttt{angshul@iiitd.ac.in}
}
\date{}

\newtheorem{theorem}{Theorem}[section]
\newtheorem{lemma}[theorem]{Lemma}
\newtheorem{proposition}[theorem]{Proposition}
\newtheorem{corollary}[theorem]{Corollary}
\newtheorem{definition}[theorem]{Definition}
\newtheorem{remark}[theorem]{Remark}
\newtheorem{assumption}[theorem]{Assumption}

\begin{document}
\maketitle
\begin{abstract}
We study the decision version of tensor spectral norm from the viewpoint of real algebraic complexity. For a rationally specified tensor, the tensor spectral threshold problem asks whether its spectral norm exceeds a prescribed rational threshold. Since the feasible domain is compact, attainment itself is trivial; the meaningful question is the threshold decision problem. We prove that this problem is $\exists\mathbb{R}$-hard by giving an explicit polynomial-time reduction from bounded quartic equality feasibility. The reduction first transforms bounded quartic feasibility into homogeneous quadratic sphere feasibility by homogenization, box encoding, and quadratic lifting. It then maps the resulting homogeneous quadratic system to a quartic form whose maximum over the unit sphere separates feasible from infeasible instances. Finally, the quartic form is represented as a symmetric order-four tensor, yielding the desired tensor spectral threshold instance. The result shows that the computational obstruction in tensor spectral norm is not merely non-convex optimization or combinatorial hardness, but real algebraic feasibility itself.
\end{abstract}

\section{Introduction}
\label{sec:introduction}

Higher-order tensors have become central objects in modern applied mathematics, signal processing, machine learning, statistics, and scientific computing. They arise naturally in latent-variable models, higher-order moment methods, multilinear regression, tensor completion, independent component analysis, polynomial optimization, and nonlinear inverse problems. In each of these settings, one is led not merely to linear algebra, but to multilinear algebra: optimization, approximation, and feasibility questions involving multilinear forms and homogeneous polynomials rather than matrices and quadratic forms.

This shift from matrices to tensors is not superficial. For matrices, many fundamental computational problems admit elegant polynomial-time solutions: spectral norm is computable by singular value decomposition, eigenvalue problems are classically tractable, and low-rank approximation enjoys a complete and stable theory. For tensors, by contrast, this landscape changes drastically. Problems that are routine in the matrix setting become substantially more difficult once one passes to order three and above. This phenomenon is now well documented in the literature and has become one of the defining themes of computational multilinear algebra \cite{Lim2005,Qi2005,Comon2008,HillarLim2013}.

Among the most basic quantities attached to a tensor is its spectral norm. For an order-$d$ tensor $\mathcal{T}\in (\mathbb{R}^n)^{\otimes d}$, the multilinear spectral norm is
\begin{equation}
\|\mathcal{T}\|_{\mathrm{op}}
:=
\max_{\|x_1\|=\cdots=\|x_d\|=1}
\left|
\mathcal{T}(x_1,\dots,x_d)
\right|.
\label{eq:intro_multilinear_norm}
\end{equation}
When $d=2$, this is exactly the operator norm of a matrix. When $d\ge 3$, however, \eqref{eq:intro_multilinear_norm} becomes a genuinely nonconvex optimization problem over a product of spheres. In the symmetric case, one may equivalently work with the associated homogeneous polynomial
\[
p_{\mathcal{T}}(x)=\mathcal{T}(x,\dots,x),
\]
and the norm reduces to optimization of a degree-$d$ polynomial over the unit sphere. Thus even the most basic tensor norm problem sits at the interface of multilinear algebra and semialgebraic optimization.

A large body of work has shown that tensor problems are computationally intractable in multiple senses. The foundational work of H\aa stad \cite{Hastad1990} established hardness of tensor rank, and Hillar and Lim \cite{HillarLim2013} later showed NP-hardness for a broad family of tensor problems, including tensor rank, eigenvalue-related questions, and best rank-one approximation. Related work has shown that tensor rank exhibits severe instability, that best low-rank approximation may be ill-posed, and that tensor eigenvalue problems differ sharply from their matrix analogues \cite{DeSilvaLim2008,Qi2005,Lim2005,Comon2008}. These results make clear that higher-order tensor computation is not merely a modest extension of matrix computation, but a qualitatively different regime.

From the perspective of theoretical computer science, tensors are also natural carriers of algebraic computation: bilinear and multilinear maps encode multiplication in algebras, arithmetic circuits, and rank-type complexity measures \cite{AlderStrassen1981,Hastad1990}. This makes tensor spectral questions particularly well suited to reductions based on polynomial feasibility rather than reductions that merely import discrete gadgets.

At the same time, most existing hardness results for tensor optimization are framed in terms of classical discrete complexity, typically NP-hardness or hardness of approximation. While these are important, they do not fully capture the nature of problems whose input and constraints are intrinsically algebraic over the real numbers. Indeed, many optimization and feasibility questions involving tensors are more naturally viewed as semialgebraic problems: they ask whether a polynomial equation or inequality holds over a real algebraic set such as a sphere, projective variety, or product of spheres. For such problems, the appropriate complexity-theoretic framework is often not merely $\mathbf{NP}$, but the class $\exists\mathbb{R}$.

The class $\exists\mathbb{R}$ consists, informally, of all decision problems polynomial-time reducible to deciding whether a system of polynomial equations and inequalities over the reals has a solution \cite{BlumShubSmale1989,Renegar1992,SchaeferCardinalMiltzow2024}. It has emerged as the right language for a large number of problems in computational geometry, real algebraic geometry, geometric graph theory, and continuous optimization. Related real-number complexity work in theoretical computer science has also studied counting, quantification, and restricted fragments over the reals \cite{Meer2000,Ruggieri2014}. The significance of $\exists\mathbb{R}$-hardness is conceptual as well as technical: it indicates that the computational obstacle is not merely combinatorial search, but the intrinsic difficulty of real algebraic feasibility. In recent years, this viewpoint has proved especially powerful in organizing problems whose discrete and continuous aspects are inseparably intertwined \cite{SchaeferStefankovic2024,SchaeferCardinalMiltzow2024}.

This raises a natural question: \emph{to what extent is the difficulty of tensor spectral norm already a manifestation of real algebraic complexity?} Put differently, is the tensor norm problem hard merely because of combinatorial encodings hidden inside multilinear structure, or is it already hard at the level of semialgebraic feasibility?

The present paper answers this question in the latter direction.

\medskip

A first attempt at phrasing tensor norm as a decision problem might ask whether the optimum is attained. But this is the wrong formulation: because the feasible set is compact and the objective is continuous, attainment is automatic. Thus the problem
\[
\exists x_1,\dots,x_d \ \text{with}\ \|x_1\|=\cdots=\|x_d\|=1
\ \text{and}\
\mathcal{T}(x_1,\dots,x_d)=\|\mathcal{T}\|_{\mathrm{op}}
\]
is always a YES-instance. The real computational content lies not in existence of an optimizer, but in the value of the optimum.

Accordingly, we study the following threshold problem: given a tensor $\mathcal{T}$ and a rational scalar $\alpha$, decide whether
\begin{equation}
\|\mathcal{T}\|_{\mathrm{op}} \ge \alpha.
\label{eq:intro_tensor_threshold}
\end{equation}
For symmetric tensors, Banach-type symmetrization reduces this to the single-vector formulation
\begin{equation}
\exists x\in S^{n-1}
\quad \text{such that} \quad
|\mathcal{T}(x,\dots,x)|\ge \alpha,
\label{eq:intro_single_vector_form}
\end{equation}
and in the order-$4$ case this becomes a quartic inequality on the sphere. Thus the problem is naturally a semialgebraic threshold question.

This viewpoint already suggests a strong link with polynomial optimization. Indeed, quartic optimization over the sphere is a central primitive in a variety of problems, including best rank-one approximation for symmetric tensors, spectral extremal questions, and polynomial moment methods. From this perspective, the tensor spectral threshold problem is not an isolated curiosity: it is a canonical semialgebraic optimization problem attached to multilinear structure. What our result shows is that this canonical problem already reaches the expressive power of \(\exists\mathbb{R}\).

\medskip

\noindent
\textbf{Main contribution.}
We prove that Tensor Spectral Threshold is $\exists\mathbb{R}$-hard. More precisely, we show that feasibility of bounded quartic equations can be reduced, through an intermediate homogeneous quadratic sphere-feasibility problem, to deciding whether the spectral norm of a symmetric order-$4$ tensor exceeds a prescribed threshold.

\begin{theorem}[Informal]
Tensor Spectral Threshold is $\exists\mathbb{R}$-hard.
\end{theorem}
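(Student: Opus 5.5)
The plan follows the three-stage pipeline sketched in the abstract. Start from bounded quartic feasibility, which is $\exists\mathbb{R}$-complete: given rational polynomials $f_1,\dots,f_m\in\mathbb{Q}[y_1,\dots,y_k]$ of degree at most four, decide whether they have a common zero in the box $[-1,1]^k$. Each stage is an explicit polynomial-time map that preserves YES/NO answers.

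\textbf{Stage 1 (bounded quartic to homogeneous quadratic sphere feasibility).}
\begin{itemize}
\item \emph{Quadratic lifting.} Introduce auxiliary variables $z_{ij}=y_iy_j$. Each quartic $f_\ell$ then becomes a quadratic in $(y,z)$, and the definitional equations $z_{ij}-y_iy_j=0$ are themselves quadratic. The result is a system of quadratic equations.
\item \emph{Box encoding.} Enforce $|y_i|\le 1$ by adding slack variables $s_i$ with $y_i^2+s_i^2=1$. The new variables then stay bounded as well, since $|z_{ij}|\le 1$.
\item \emph{Homogenization.} Add one variable $t$ and homogenize every quadratic with $t$.
\item \emph{Sphere normalization.} Append a single normalizing quadratic equation of the form $\|w\|^2=c\,t^2$. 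Because all lifted variables are bounded, $c$ can be chosen as an explicit rational constant.
\end{itemize}
The outcome is a homogeneous quadratic system $q_1(w)=\dots=q_r(w)=0$ in $w=(t,y,z,s)\in\mathbb{R}^N$. It has a solution on $S^{N-1}$ exactly when the original box instance is feasible. The key point is that $t\neq0$ is forced: if $t=0$, the box equations $y_i^2+s_i^2=t^2$ give $y=s=0$, then $z=0$, and then $w=0$, contradicting $w\in S^{N-1}$. Dividing by $t$ therefore recovers a genuine box solution.

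\textbf{Stage 2 (to a quartic threshold).} Set $M$ to be a rational upper bound on $\sum_\ell\|q_\ell\|^2$ over the sphere, computable from the coefficients. Define
\[
p(w)=M\|w\|^4-\sum_{\ell=1}^r q_\ell(w)^2,
\]
a quartic form that is nonnegative on the sphere. Then $\max_{S^{N-1}} p = M$ if and only if the quadratic system is feasible on the sphere. Taking $\alpha=M$ gives the threshold instance "$\max_{S^{N-1}}|p|\ge\alpha$". This step needs no separation gap: the YES case attains $p=M$ exactly, and in the NO case $\sum q_\ell^2>0$ everywhere on the compact sphere, so the maximum is strictly below $M$.

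\textbf{Stage 3 (to a tensor).} Let $\mathcal{T}$ be the symmetric order-four tensor with $\mathcal{T}(x,x,x,x)=p(x)$. Its entries are rational and obtained by symmetrizing the coefficients, with polynomial bit size. Banach's theorem for symmetric tensors gives $\|\mathcal{T}\|_{\mathrm{op}}=\max_{S^{N-1}}|p(x)|$. Since $p\ge0$ on the sphere, this equals $\max_{S^{N-1}} p$. Hence $\|\mathcal{T}\|_{\mathrm{op}}\ge M$ if and only if the original instance is feasible.

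\textbf{Main obstacle.} The hardest part is Stage 1. It requires getting the box encoding and homogenization right so that no spurious solutions appear "at infinity" with $t=0$, and so that all bit sizes, in particular of $c$ and $M$, stay polynomial. I would handle this with the boundedness argument above: the box equations force every coordinate to vanish when $t=0$, which excludes such solutions. I would also choose $M$ by a crude coefficient-norm bound, for example the sum of the squared absolute coefficients times a power of $N$, which keeps $M$ rational and of polynomial size.
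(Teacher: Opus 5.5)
Your Stages 2 and 3 match the paper's argument, and Stage 1 uses the same ingredients. The paper takes $B=\sum_i\|Q_i\|_F^2+1$, so that $p\ge 1$ on the sphere, but your $M$ works just as well. The gap is in Stage 1, at exactly the step you flagged. Because you lift \emph{before} homogenizing, your lifting equations read $t\,z_{ij}-y_iy_j=0$, and at $t=0$ they only say $y_iy_j=0$. Once the box equations give $y=s=0$, nothing forces $z=0$. The bound $|z_{ij}|\le 1$ holds at genuine solutions but is not imposed by any equation, so ``then $z=0$'' is false in general. Take $f(y_1)=y_1^3+2$, which has no zero in $[-1,1]$. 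Its lifted and homogenized system is $y_1z_{11}+2t^2=0$, $t\,z_{11}=y_1^2$, $y_1^2+s_1^2=t^2$, and the unit vector $(t,y_1,z_{11},s_1)=(0,0,1,0)$ satisfies all of it. Your normalizing equation $\|w\|^2=c\,t^2$ would exclude such points, since it forces $w=0$ when $t=0$. But as an exact equality it breaks completeness. At a solution with $t\neq0$, the ratio $\|w\|^2/t^2=1+\|\hat y\|^2+\|\hat z\|^2+\|\hat s\|^2$ (hats denoting division by $t$) depends on the solution. For $f=y_1^3-\tfrac12$ the only root is $2^{-1/3}\in[-1,1]$, and the required ratio $2+2^{-4/3}$ is irrational, so no rational $c$ works. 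As written, the reduction fails soundness without that equation and completeness with it.

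The repair is short. Add a slack variable $\sigma$ and use $\|w\|^2+\sigma^2=c\,t^2$, with the a priori bound $c=1+k+k(k+1)/2$ on $\|(1,y,z,s)\|^2$ over the box, and put the sphere constraint on $(w,\sigma)$. Alternatively, box-encode the lifted variables too, via $z_{ij}^2+r_{ij}^2=t^2$. Either way, $t=0$ forces the whole vector to vanish, and dividing by $t$ recovers a box solution as you describe. For comparison, the paper's analogous step ($x_0=0\Rightarrow v=0\Rightarrow u=0$) works only because it writes the lifting constraints as $u_{ab}=v_av_b$. These mix degrees one and two, so the paper's system is not actually homogeneous. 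Your homogenized form is what a genuinely homogeneous system, and hence a homogeneous quartic $p$ in Stage 2, requires. It just needs one of these extra devices to exclude solutions at infinity.
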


Our proof is completely algebraic. Starting from a bounded quartic equality normal form, we first reduce to a system of homogeneous quadratic equations on the sphere. We then construct a quartic form
\[
p(z)=B\|z\|^4-\sum_{i=1}^r (z^\top Q_i z)^2
\]
whose maximum over the sphere is exactly \(B\) if and only if the original quadratic system is feasible. This quartic form is then interpreted as a symmetric order-$4$ tensor. The construction is simple, explicit, and exact: no appeal is made to KKT systems, Lagrangian duality, local optimality conditions, or perturbative arguments. Feasibility is encoded directly into the value of a quartic objective.

\medskip

\noindent
\textbf{Why this matters.}
The result sharpens the current understanding of tensor complexity in several ways.

First, it shows that tensor spectral norm is not merely NP-hard in the traditional sense, but already hard at the level of real algebraic feasibility. This places tensor norm computation in the same conceptual family as many realization and semialgebraic decision problems that are known to be \(\exists\mathbb{R}\)-hard.

Second, the reduction is natural. It does not proceed through elaborate graph gadgets or combinatorial encodings artificially imposed on multilinear algebra. Instead, it exploits a basic and intrinsic fact: quartic forms can encode systems of quadratic equations, and symmetric tensors are exactly quartic forms in disguise. In that sense, the hardness here is structural rather than accidental.

Third, the result helps position tensor spectral problems within semialgebraic optimization. Much of modern work on polynomial optimization proceeds through hierarchies such as sum-of-squares and moment relaxations. Our theorem indicates that even before one asks about approximation quality, hierarchy convergence, or certificate size, the underlying exact decision problem already carries \(\exists\mathbb{R}\)-hardness. This clarifies why one should not expect a simple complete characterization of tractability for general tensor spectral problems.

\medskip

\noindent
\textbf{Relation to prior work.}
Our result complements, rather than replaces, the classical hardness theory for tensors. H\aa stad \cite{Hastad1990} and Hillar and Lim \cite{HillarLim2013} established that tensor rank and many related tensor problems are NP-hard. We refine the picture for one of the most basic tensor quantities by identifying a threshold version whose difficulty is algebraic in nature. In this sense, the present work may be viewed as moving from \emph{discrete hardness of tensor problems} to \emph{real-algebraic hardness of tensor norm}. The emphasis on exact algebraic feasibility is consistent with the broader theoretical-computer-science treatment of computation over the reals \cite{BlumShubSmale1989,Meer2000,Ruggieri2014}. We are not aware of a previous result isolating the tensor spectral threshold problem itself as an \(\exists\mathbb{R}\)-hard problem in this direct form.

\medskip

\noindent
\textbf{Contributions.}
The main contributions of the paper are as follows.

\begin{itemize}
    \item We identify Tensor Spectral Threshold as the correct decision formulation for tensor spectral norm, in contrast to the trivial attainment formulation.
    
    \item We prove that this problem is $\exists\mathbb{R}$-hard via a clean reduction chain through bounded quartic equality feasibility and homogeneous quadratic sphere feasibility.
    
    \item We show that the reduction is entirely algebraic, requiring no optimization-theoretic machinery such as KKT conditions or duality.
    
    \item We establish that the result persists across several natural variants, including non-symmetric tensors, multilinear formulations, and equality versions of the threshold problem.
\end{itemize}

\medskip

\noindent
\textbf{Organization.}
Section~\ref{sec:preliminaries} introduces the algebraic and complexity-theoretic preliminaries. Section~\ref{sec:problem_formulation} formalizes the tensor spectral threshold problem and explains why attainment is trivial. Section~\ref{sec:bq4e_to_hqsf} reduces a bounded quartic equality normal form to homogeneous quadratic sphere feasibility. Section~\ref{sec:hqsf_to_tensor} contains the core reduction from homogeneous quadratic sphere feasibility to tensor spectral threshold. Section~\ref{sec:main_result} states the main theorem and its immediate consequences. Section~\ref{sec:variants} discusses several variants and extensions of the result.

\section{Preliminaries}
\label{sec:preliminaries}

This section introduces the algebraic and complexity-theoretic background required for the reductions developed in later sections. We formalize polynomial systems over the reals, symmetric tensor representations, and the complexity class $\exists\mathbb{R}$.

\subsection{Algebraic preliminaries}
\label{subsec:algebraic_preliminaries}

Let $\mathbb{Q}$ and $\mathbb{R}$ denote the fields of rational and real numbers, respectively. For $n \in \mathbb{N}$, we write $x = (x_1,\dots,x_n) \in \mathbb{R}^n$.

\begin{definition}[Polynomial]
\label{def:polynomial}
A polynomial $f \in \mathbb{Q}[x_1,\dots,x_n]$ is a finite linear combination of monomials of the form
\[
x_1^{\alpha_1} \cdots x_n^{\alpha_n},
\quad \alpha_i \in \mathbb{N}.
\]
The degree of $f$ is the maximum total degree of its monomials.
\end{definition}

\begin{definition}[Homogeneous polynomial]
\label{def:homogeneous_polynomial}
A polynomial $f$ is homogeneous of degree $d$ if
\[
f(\lambda x) = \lambda^d f(x)
\quad \forall \lambda \in \mathbb{R},\ x \in \mathbb{R}^n.
\]
\end{definition}

\begin{definition}[Quadratic form]
\label{def:quadratic_form}
A quadratic form $q:\mathbb{R}^n \to \mathbb{R}$ is a homogeneous polynomial of degree $2$. Every quadratic form can be written as
\begin{equation}
q(x) = x^\top Q x,
\end{equation}
where $Q \in \mathbb{R}^{n \times n}$ is symmetric.
\end{definition}

\subsection{Norms and spheres}
\label{subsec:norms_and_spheres}

We equip $\mathbb{R}^n$ with the Euclidean norm
\[
\|x\| := \sqrt{x^\top x}.
\]

\begin{definition}[Unit sphere]
\label{def:sphere}
The unit sphere in $\mathbb{R}^n$ is
\[
S^{n-1} := \{x \in \mathbb{R}^n : \|x\| = 1\}.
\]
\end{definition}

\begin{lemma}[Compactness of the sphere]
\label{lem:sphere_compact}
The set $S^{n-1}$ is compact.
\end{lemma}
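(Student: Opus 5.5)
The natural route is Heine--Borel: in $\mathbb{R}^n$ a set is compact if and only if it is closed and bounded. The proof then reduces to checking these two properties for $S^{n-1}$, in that order.

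\emph{Boundedness.} Every $x \in S^{n-1}$ satisfies $\|x\| = 1$. Hence $S^{n-1}$ lies in the closed ball of radius $1$ about the origin, and in particular is bounded.

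\emph{Closedness.} The Euclidean norm $x \mapsto \|x\| = \sqrt{x^\top x}$ is continuous, since it is a composition of the polynomial map $x \mapsto x^\top x$ with the continuous square root on $[0,\infty)$. Therefore
\[
S^{n-1} = \|\cdot\|^{-1}(\{1\})
\]
is the preimage of the closed set $\{1\} \subset \mathbb{R}$ under a continuous map, and so is closed. One could avoid the square root altogether by writing $S^{n-1}$ as the zero set of the polynomial $x^\top x - 1$. This version is convenient for the semialgebraic viewpoint used later, because it exhibits $S^{n-1}$ as a real algebraic set.

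Combining the two steps, $S^{n-1}$ is closed and bounded in $\mathbb{R}^n$, so Heine--Borel gives compactness. There is no real obstacle. The only point that needs care is the appeal to Heine--Borel, which is valid because we work in the finite-dimensional space $\mathbb{R}^n$ with the Euclidean norm, and on $\mathbb{R}^n$ all norms are equivalent anyway. Later sections use this lemma together with continuity of polynomial (tensor) objectives to conclude that maxima over the sphere are attained, and that is why attainment is trivial in the tensor spectral threshold problem.
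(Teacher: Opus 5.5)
Your proof is correct and follows the same route as the paper, which invokes Heine--Borel after noting that $S^{n-1}$ is closed and bounded. You simply verify closedness (as the preimage of $\{1\}$ under the continuous norm) and boundedness explicitly, where the paper only asserts them.
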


\begin{proof}
The sphere is closed and bounded in $\mathbb{R}^n$, hence compact by the Heine--Borel theorem.
\end{proof}

\subsection{Symmetric tensors and polynomial representation}
\label{subsec:symmetric_tensors_prelim}

Let $(\mathbb{R}^n)^{\otimes d}$ denote the space of order-$d$ tensors.

\begin{definition}[Symmetric tensor]
\label{def:symmetric_tensor_prelim}
A tensor $\mathcal{T} \in (\mathbb{R}^n)^{\otimes d}$ is symmetric if its entries are invariant under permutation of indices.
\end{definition}

\begin{definition}[Associated homogeneous polynomial]
\label{def:associated_polynomial_prelim}
For $\mathcal{T} \in \mathrm{Sym}^d(\mathbb{R}^n)$, define
\begin{equation}
p_{\mathcal{T}}(x) := \mathcal{T}(x,\dots,x).
\end{equation}
\end{definition}

\begin{lemma}[Tensor--polynomial correspondence]
\label{lem:tensor_polynomial_prelim}
There is a one-to-one correspondence between symmetric tensors in $\mathrm{Sym}^d(\mathbb{R}^n)$ and homogeneous polynomials of degree $d$.
\end{lemma}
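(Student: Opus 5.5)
The plan is to exhibit the two maps explicitly and check that they are mutually inverse linear bijections between $\mathrm{Sym}^d(\mathbb{R}^n)$ and the space $H_d$ of degree-$d$ homogeneous polynomials in $n$ variables. The forward map is $\Phi:\mathcal{T}\mapsto p_{\mathcal{T}}$ from Definition~\ref{def:associated_polynomial_prelim}. Expanding multilinearly gives
\[
p_{\mathcal{T}}(x)=\sum_{i_1,\dots,i_d=1}^n \mathcal{T}_{i_1\cdots i_d}\,x_{i_1}\cdots x_{i_d}.
\]
This is visibly homogeneous of degree $d$, and $\Phi$ is linear. Grouping the terms by multi-index $\alpha\in\mathbb{N}^n$ with $|\alpha|=d$, the coefficient of $x^\alpha$ equals $\binom{d}{\alpha}\mathcal{T}_{i(\alpha)}$. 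Here $i(\alpha)$ is any index tuple containing $j$ exactly $\alpha_j$ times, and the choice does not matter by symmetry (Definition~\ref{def:symmetric_tensor_prelim}). The factor $\binom{d}{\alpha}=d!/(\alpha_1!\cdots\alpha_n!)$ counts the tuples of that type.

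For the inverse map, given $f=\sum_{|\alpha|=d}c_\alpha x^\alpha\in H_d$, set
\[
\Psi(f)_{i_1\cdots i_d}:=c_{\alpha(i)}\big/\tbinom{d}{\alpha(i)},
\]
where $\alpha(i)$ is the multiplicity vector of the tuple $(i_1,\dots,i_d)$. This tensor is symmetric because $\alpha(i)$ is invariant under permuting indices. The coefficient computation above then gives $\Phi(\Psi(f))=f$. Conversely, $\Psi(\Phi(\mathcal{T}))=\mathcal{T}$, because a symmetric tensor is determined by its values on multiplicity classes, and those values are recovered as $\binom{d}{\alpha}^{-1}$ times the coefficients.

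So $\Phi$ is a bijection, in fact a linear isomorphism. As a sanity check, both spaces have dimension $\binom{n+d-1}{d}$. One can also get injectivity alternatively via polarization: $\mathcal{T}(x_1,\dots,x_d)$ is recovered from $p_{\mathcal{T}}$ by the $d$-th mixed partial $\frac{1}{d!}\partial_{x_1}\cdots\partial_{x_d}$ of $p_{\mathcal{T}}(\sum t_jx_j)$ at $t=0$.

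The only delicate point is the bookkeeping of the multinomial factor and the symmetry argument that makes $\Psi(f)$ well defined. Everything else is routine linear algebra. For later use in the reduction, note that the correspondence preserves rationality: $\Psi$ only divides by integers, so rational coefficients give rational entries computable in polynomial time.
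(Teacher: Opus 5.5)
Your proof is correct and takes the same route as the paper: the forward map $\mathcal{T}\mapsto p_{\mathcal{T}}$, with the inverse recovered from the coefficients by polarization. The difference is that you carry out explicitly the multinomial bookkeeping $c_\alpha=\binom{d}{\alpha}\mathcal{T}_{i(\alpha)}$ and the check that the two maps are mutually inverse, which the paper leaves to the cited literature. One small slip in your aside: the polarization operator should be $\frac{1}{d!}\partial_{t_1}\cdots\partial_{t_d}$ applied to $p_{\mathcal{T}}\bigl(\sum_j t_j x_j\bigr)$ at $t=0$, not $\partial_{x_1}\cdots\partial_{x_d}$.
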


\begin{proof}
Given $\mathcal{T}$, define $p(x)=\mathcal{T}(x,\dots,x)$. Conversely, given a homogeneous polynomial $p$, its coefficients define a symmetric tensor uniquely via polarization; see \cite{Comon2008,Qi2005,Lim2005}.
\end{proof}

\subsection{Norms of matrices}
\label{subsec:matrix_norms}

We will use the Frobenius norm for matrices.

\begin{definition}[Frobenius norm]
\label{def:frobenius_norm}
For $Q \in \mathbb{R}^{n \times n}$,
\[
\|Q\|_F := \sqrt{\sum_{i,j} Q_{ij}^2}.
\]
\end{definition}

\begin{lemma}
\label{lem:trace_bound}
For all $x \in \mathbb{R}^n$ and symmetric $Q$,
\begin{equation}
|x^\top Q x| \le \|Q\|_F \cdot \|x\|^2.
\end{equation}
\end{lemma}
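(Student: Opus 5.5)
The plan is to bound $|x^\top Q x|$ by a chain of two standard inequalities: Cauchy--Schwarz for the inner product, followed by the comparison between the operator (spectral) norm and the Frobenius norm. If $x=0$ both sides vanish, so assume $x\neq 0$. Write $x^\top Q x = \langle x, Qx\rangle$. Cauchy--Schwarz in $\mathbb{R}^n$ then gives
\[
|x^\top Q x| \le \|x\|\,\|Qx\|,
\]
so it suffices to show $\|Qx\| \le \|Q\|_F\,\|x\|$.

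For this, let $q_i^\top$ denote the $i$-th row of $Q$. Then $(Qx)_i = q_i^\top x$, and Cauchy--Schwarz applied row by row gives $|(Qx)_i|^2 \le \|q_i\|^2\|x\|^2$. Summing over $i$ yields
\[
\|Qx\|^2 = \sum_i (q_i^\top x)^2 \le \Big(\sum_i \|q_i\|^2\Big)\|x\|^2 = \|Q\|_F^2\,\|x\|^2,
\]
using $\sum_i \|q_i\|^2 = \sum_{i,j} Q_{ij}^2$ from Definition~\ref{def:frobenius_norm}. Taking square roots and combining with the first display gives the claim. Symmetry of $Q$ is not actually needed.

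An alternative route uses the spectral theorem: for symmetric $Q$ with eigenvalues $\lambda_k$, we have $|x^\top Qx|\le \max_k|\lambda_k|\,\|x\|^2 \le (\sum_k\lambda_k^2)^{1/2}\|x\|^2 = \|Q\|_F\|x\|^2$. I would present the elementary row-wise argument instead, since it avoids invoking diagonalization. There is no real obstacle here; the only point that needs care is the identification $\sum_i\|q_i\|^2=\|Q\|_F^2$, which is immediate from the definition.
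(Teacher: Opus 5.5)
Your proof is correct, but it follows a slightly different route from the paper's. The paper writes $x^\top Qx=\mathrm{tr}(Qxx^\top)$ and applies Cauchy--Schwarz once, in $\mathbb{R}^{n\times n}$ with the Frobenius inner product. This gives $|x^\top Qx|\le\|Q\|_F\,\|xx^\top\|_F$, and the proof finishes with $\|xx^\top\|_F=\|x\|^2$. You stay in $\mathbb{R}^n$ and apply Cauchy--Schwarz twice: first to $\langle x,Qx\rangle$, then row by row to get $\|Qx\|\le\|Q\|_F\|x\|$. Your argument therefore passes through the spectral norm, $|x^\top Qx|\le\bigl(\sup_{\|y\|=1}\|Qy\|\bigr)\|x\|^2\le\|Q\|_F\|x\|^2$. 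The paper's version is a one-liner that exploits the rank-one structure of $xx^\top$. Yours uses only vector Cauchy--Schwarz and proves the stronger intermediate bound $\|Qx\|\le\|Q\|_F\|x\|$ along the way, which shows explicitly that the Frobenius bound is a relaxation of the sharper operator-norm bound. Neither argument uses symmetry of $Q$, as you correctly observe; only your spectral-theorem alternative needs it. Your separate treatment of $x=0$ is harmless but unnecessary, since the chain of inequalities holds there as written.
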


\begin{proof}
We write
\[
x^\top Q x = \mathrm{tr}(Q xx^\top).
\]
Applying Cauchy--Schwarz in Frobenius inner product,
\[
|\mathrm{tr}(Q xx^\top)|
\le
\|Q\|_F \cdot \|xx^\top\|_F.
\]
Since $\|xx^\top\|_F = \|x\|^2$, the result follows.
\end{proof}

\subsection{The existential theory of the reals}
\label{subsec:etr}

\begin{definition}[$\exists\mathbb{R}$]
\label{def:etr}
The class $\exists\mathbb{R}$ consists of all decision problems that can be reduced in polynomial time to determining whether a system of polynomial equations and inequalities over $\mathbb{R}$ has a solution, i.e.,
\begin{equation}
\exists x \in \mathbb{R}^n:
\quad
f_i(x)=0,\quad g_j(x)\ge 0.
\end{equation}
\end{definition}

The class $\exists\mathbb{R}$ lies between $\mathbf{NP}$ and $\mathbf{PSPACE}$ and captures the complexity of real algebraic feasibility problems.

\begin{theorem}[BSS model]
\label{thm:bss}
Deciding feasibility of polynomial systems over $\mathbb{R}$ defines a natural complexity class in the Blum--Shub--Smale model of computation.
\end{theorem}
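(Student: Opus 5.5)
This statement is essentially definitional and imported, so I would prove it by recalling the Blum--Shub--Smale framework rather than building anything new. A BSS machine over $\mathbb{R}$ is a finite-control machine whose tape cells hold real numbers. It can perform field operations $+,-,\times,\div$ and branch on sign tests $x\ge 0$, and its running time is counted in arithmetic steps. From \cite{BlumShubSmale1989} I would take the class $\mathbf{NP}_{\mathbb{R}}$: decision problems with a polynomial-time BSS verifier that receives a real witness of polynomial length. The plan is to exhibit feasibility of polynomial systems (Definition~\ref{def:etr}) as a canonical member of this class, and then identify $\exists\mathbb{R}$ with its Boolean part.

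The first step is membership. Given a system $f_i(x)=0$, $g_j(x)\ge 0$ with rational coefficients, the verifier guesses a witness $x\in\mathbb{R}^n$. It evaluates each $f_i$ and $g_j$ by a straight-line program whose size is polynomial in the encoding of the system, using the monomial or sparse representation of Definition~\ref{def:polynomial}. It then performs one sign test per constraint. This takes polynomially many arithmetic steps, so the problem lies in $\mathbf{NP}_{\mathbb{R}}$.

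The second step is naturality, meaning completeness. I would invoke the BSS analogue of Cook--Levin, namely that feasibility of real polynomial systems (equivalently 4-FEAS) is $\mathbf{NP}_{\mathbb{R}}$-complete. The argument encodes the computation path of a polynomial-time BSS machine as a polynomial system in the tape contents at each time step. The gate relations become equations, and each sign branch becomes an inequality $g\ge 0$ or its strict or negated form. Strict inequalities are removed by introducing auxiliary variables, for example $g\,y^2=1$.

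The final step restricts to rational or integer inputs. This Boolean part of $\mathbf{NP}_{\mathbb{R}}$, under polynomial-time Turing-model reductions, coincides with $\exists\mathbb{R}$ as in Definition~\ref{def:etr}. This identification is where the word ``natural'' is justified.

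The main obstacle is that the statement is informal: ``natural complexity class'' is not a precise mathematical predicate. Any proof therefore amounts to fixing an interpretation, namely completeness for $\mathbf{NP}_{\mathbb{R}}$ together with the Boolean-part identification, and citing the standard results \cite{BlumShubSmale1989,SchaeferCardinalMiltzow2024}. The Cook--Levin encoding is the only technical piece, and I would cite it rather than reprove it.
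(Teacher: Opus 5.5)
Your approach is the paper's: its entire proof is the citation \cite{BlumShubSmale1989}, and your membership argument plus the BSS analogue of Cook--Levin ($4$-FEAS is $\mathbf{NP}_{\mathbb{R}}$-complete) is exactly what that citation supplies.

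One step is wrong as stated, however. The Boolean part of $\mathbf{NP}_{\mathbb{R}}$ does not coincide with $\exists\mathbb{R}$ when BSS machines may use arbitrary real constants. To see this, take an undecidable unary language $L$ and store its characteristic sequence in the base-$4$ digits of a machine constant $c=\sum_n \chi_L(n)4^{-n}$. A BSS machine then decides whether $1^n\in L$ in $O(n)$ steps: it repeatedly multiplies by $4$, tests whether the result is at least $1$, and subtracts. So $\mathrm{BP}(\mathbf{P}_{\mathbb{R}})$ already contains undecidable languages, whereas $\exists\mathbb{R}\subseteq\mathbf{PSPACE}$.

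The correct statement is for the constant-free class: $\mathrm{BP}(\mathbf{NP}^0_{\mathbb{R}})=\exists\mathbb{R}$ (B\"urgisser--Cucker). The matching complete problem there is feasibility of systems with integer coefficients. You explicitly rest the word ``natural'' on this identification, so you should state it in that form.
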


\begin{proof}
See \cite{BlumShubSmale1989}.
\end{proof}

\begin{remark}
\label{rem:etr_position}
The class $\exists\mathbb{R}$ is widely used to characterize problems arising in geometry, optimization, and algebra. Many natural problems are known to be $\exists\mathbb{R}$-complete; see \cite{SchaeferCardinalMiltzow2024}.
\end{remark}

\subsection{Bounded feasibility}
\label{subsec:bounded_feasibility}

\begin{definition}[Bounded feasibility]
\label{def:bounded_feasibility}
A polynomial feasibility problem is bounded if the variables are restricted to a compact domain such as $[-1,1]^n$.
\end{definition}

\begin{remark}
\label{rem:bounded_etr}
Bounded-domain polynomial feasibility remains $\exists\mathbb{R}$-hard and admits normal-form reductions preserving degree and equality structure; see \cite{SchaeferStefankovic2024,SchaeferCardinalMiltzow2024}.
\end{remark}

\subsection{Polynomial-time reductions}
\label{subsec:reductions}

\begin{definition}[Polynomial-time reduction]
\label{def:reduction}
A problem $A$ reduces to $B$ in polynomial time, denoted $A \le_p B$, if every instance of $A$ can be transformed into an instance of $B$ in time polynomial in the input size, preserving YES/NO answers.
\end{definition}

\begin{remark}
\label{rem:reduction_scope}
All reductions in this paper are many-one polynomial-time reductions over rational input data.
\end{remark}

\section{Tensor Spectral Threshold: Problem Formulation}
\label{sec:problem_formulation}

In this section, we formulate the tensor decision problem studied in this paper. We also clarify why the natural ``attainment'' formulation is trivial, and why the correct complexity-theoretic object is a threshold problem.

\subsection{Multilinear forms and symmetric tensors}
\label{subsec:multilinear_and_symmetric_tensors}

Let $\mathbb{R}^n$ be equipped with the standard Euclidean norm. An order-$d$ tensor
\[
\mathcal{T} \in (\mathbb{R}^n)^{\otimes d}
\]
defines a multilinear form
\begin{equation}
\mathcal{T}(x_1,\dots,x_d)
=
\sum_{i_1,\dots,i_d=1}^n
\mathcal{T}_{i_1\dots i_d}
(x_1)_{i_1}\cdots (x_d)_{i_d}.
\label{eq:multilinear_form}
\end{equation}

\begin{definition}[Symmetric tensor]
\label{def:symmetric_tensor}
An order-$d$ tensor $\mathcal{T}$ is symmetric if its entries are invariant under any permutation of indices.
\end{definition}

If $\mathcal{T}$ is symmetric, define the associated homogeneous polynomial
\begin{equation}
p_{\mathcal{T}}(x)
:=
\mathcal{T}(x,\dots,x).
\label{eq:associated_polynomial}
\end{equation}

\begin{lemma}[Tensor--polynomial equivalence]
\label{lem:tensor_polynomial_equivalence}
There is a one-to-one correspondence between symmetric tensors 
$\mathcal{T} \in \mathrm{Sym}^d(\mathbb{R}^n)$ and homogeneous polynomials 
$p:\mathbb{R}^n \to \mathbb{R}$ of degree $d$.
\end{lemma}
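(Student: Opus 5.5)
The plan is to exhibit two maps and check that they are mutually inverse. We work over $\mathbb{R}$, where $d!$ is invertible.

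The forward map is $\Phi:\mathcal{T}\mapsto p_{\mathcal{T}}$, as in \eqref{eq:associated_polynomial}. Expanding \eqref{eq:multilinear_form} with $x_1=\cdots=x_d=x$ gives $p_{\mathcal{T}}(x)=\sum_{i_1,\dots,i_d}\mathcal{T}_{i_1\dots i_d}x_{i_1}\cdots x_{i_d}$. Every term has total degree $d$, so $p_{\mathcal{T}}$ is homogeneous of degree $d$ in the sense of Definition~\ref{def:homogeneous_polynomial}. The map $\Phi$ is linear in $\mathcal{T}$. The main thing to verify is the coefficient formula. Group the indices by multi-index $\alpha\in\mathbb{N}^n$ with $|\alpha|=d$. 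By symmetry (Definition~\ref{def:symmetric_tensor}), all index tuples with the same multiset of entries carry the same entry $\mathcal{T}_\alpha$. Hence the coefficient of $x^\alpha$ is $\binom{d}{\alpha}\mathcal{T}_\alpha$, where $\binom{d}{\alpha}=d!/(\alpha_1!\cdots\alpha_n!)$.

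Injectivity then follows at once. If $p_{\mathcal{T}}=0$ as a polynomial, every coefficient $\binom{d}{\alpha}\mathcal{T}_\alpha$ vanishes. The multinomial coefficient is a nonzero integer, so every $\mathcal{T}_\alpha=0$ and therefore $\mathcal{T}=0$. We have to be careful that "$p=0$ as a function on $\mathbb{R}^n$" and "$p=0$ as a polynomial" agree. This holds because $\mathbb{R}$ is infinite: a real polynomial vanishing on all of $\mathbb{R}^n$ is the zero polynomial, which follows by induction on $n$.

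For surjectivity we construct the inverse explicitly. Given $p(x)=\sum_{|\alpha|=d}c_\alpha x^\alpha$, define $\mathcal{T}_{i_1\dots i_d}:=c_{\alpha(i)}/\binom{d}{\alpha(i)}$, where $\alpha(i)$ counts how often each index appears in $(i_1,\dots,i_d)$. This tensor is symmetric by construction, and the coefficient formula gives $p_{\mathcal{T}}=p$. Alternatively, one can recover $\mathcal{T}$ by the polarization identity
\[
\mathcal{T}(x_1,\dots,x_d)=\frac{1}{d!}\,\frac{\partial^d}{\partial t_1\cdots\partial t_d}\,p(t_1x_1+\cdots+t_dx_d),
\]
which yields the same tensor. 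This matches the proof sketched for Lemma~\ref{lem:tensor_polynomial_prelim}.

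The only real obstacle is the bookkeeping in the coefficient computation, that is, counting the index tuples in each permutation orbit correctly. Everything else is linear algebra. As a consistency check, both spaces have dimension $\binom{n+d-1}{d}$, which confirms that the injective linear map $\Phi$ is a bijection.
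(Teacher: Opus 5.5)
Your proposal is correct and follows the same route as the paper: the forward map $\mathcal{T}\mapsto p_{\mathcal{T}}$, with the inverse recovered from the monomial coefficients via polarization. The paper only sketches this and cites references, while you supply the omitted details, all of which are correct: the coefficient formula $\binom{d}{\alpha}\mathcal{T}_\alpha$, the identification of polynomials with polynomial functions over the infinite field $\mathbb{R}$, and the explicit symmetric inverse.
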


\begin{proof}
Given $\mathcal{T}$, define $p(x)=\mathcal{T}(x,\dots,x)$. Conversely, given a homogeneous polynomial $p$, the coefficients of its monomials uniquely determine a symmetric tensor via polarization. See \cite{Comon2008,Qi2005,Lim2005} for standard constructions.
\end{proof}

\subsection{Tensor spectral norm}
\label{subsec:tensor_spectral_norm}

\begin{definition}[Multilinear spectral norm]
\label{def:multilinear_spectral_norm}
For $\mathcal{T} \in (\mathbb{R}^n)^{\otimes d}$,
\begin{equation}
\|\mathcal{T}\|_{\mathrm{op}}
:=
\max_{\|x_1\|=\cdots=\|x_d\|=1}
\left|
\mathcal{T}(x_1,\dots,x_d)
\right|.
\label{eq:multilinear_spectral_norm}
\end{equation}
\end{definition}

For symmetric tensors, define
\begin{equation}
\|\mathcal{T}\|_{\mathrm{sym},\mathrm{op}}
:=
\max_{\|x\|=1}
\left|
\mathcal{T}(x,\dots,x)
\right|.
\label{eq:symmetric_spectral_norm}
\end{equation}

\begin{theorem}[Symmetric reduction]
\label{thm:symmetric_reduction}
Let $\mathcal{T}\in \mathrm{Sym}^d(\mathbb{R}^n)$. Then
\begin{equation}
\|\mathcal{T}\|_{\mathrm{op}}
=
\|\mathcal{T}\|_{\mathrm{sym},\mathrm{op}}.
\end{equation}
\end{theorem}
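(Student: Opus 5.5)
This is Banach's theorem, and I will prove it through the polarization formula. One inequality is immediate: taking $x_1=\cdots=x_d=x$ shows $\|\mathcal{T}\|_{\mathrm{sym},\mathrm{op}}\le\|\mathcal{T}\|_{\mathrm{op}}$. The work lies in the reverse inequality $|\mathcal{T}(x_1,\dots,x_d)|\le \max_{\|x\|=1}|\mathcal{T}(x,\dots,x)|$ for unit vectors $x_i$. Polarization alone only gives this up to a factor $d^d/d!$, so it is not enough.

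The plan is to reduce the multilinear maximization to the case where all vectors coincide, arguing on pairs of arguments. Compactness (Lemma~\ref{lem:sphere_compact}) guarantees that the maximum $M=\|\mathcal{T}\|_{\mathrm{op}}$ is attained. Among all maximizers, pick $(x_1,\dots,x_d)$ maximizing the number of equal entries, or alternatively minimizing $\sum_{i<j}\|x_i-x_j\|^2$; both selections are over compact sets. Fix all arguments but two, say $x_1,x_2$. The form $(u,v)\mapsto \mathcal{T}(u,v,x_3,\dots,x_d)$ is then a symmetric bilinear form with symmetric matrix $A$, by symmetry of $\mathcal{T}$. For a symmetric matrix, $\max_{\|u\|=\|v\|=1}|u^\top A v|=\max_{\|u\|=1}|u^\top A u|$, by the spectral theorem applied to the largest $|\lambda|$. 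Hence one may replace $x_1,x_2$ by a single unit eigenvector $w$ (or $\pm w$) without decreasing the value, so $(w,w,x_3,\dots,x_d)$ is again a maximizer.

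The main obstacle is showing that the iteration terminates with all entries equal. Naive pairwise replacement can destroy equalities created earlier. To avoid this, I will argue by grouping. Suppose a maximizer has $k$ copies of $u$ and $d-k$ copies of $v$. Consider the contraction $\mathcal{T}(\cdot,\dots,\cdot,x_{\text{rest}})$ restricted to the span of $u$ and $v$. Alternatively, use the cleaner route: take a maximizer with the largest possible number $m$ of entries equal to a common vector $u$. If $m<d$, pick some $x_j\neq \pm u$ and apply the pairwise step to $(u,x_j)$, with all other arguments fixed. The resulting eigenvector $w$ yields a maximizer, but only two entries equal $w$, so this does not directly increase $m$.

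To fix this, I will instead use the standard Kellogg/Banach argument. Show first that for any maximizer and any $i\ne j$, the pair $(x_i,x_i)$ and the pair $(x_j,x_j)$ in slots $i,j$ also give maximizers. This follows from the equality case of $|u^\top A v|\le\sqrt{|u^\top A u|\,|v^\top A v|}$, which holds for the relevant sign after replacing $A$ by $\pm A$ when $A$ is semidefinite on the span. When $A$ is indefinite, the eigenvector argument applies to $u=x_i,v=x_j$ directly. Once slot replacement $x_j\mapsto x_i$ is known to preserve maximality, applying it $d-1$ times makes all entries equal to $x_1$, giving $M=|\mathcal{T}(x_1,\dots,x_1)|\le\|\mathcal{T}\|_{\mathrm{sym},\mathrm{op}}$. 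Verifying this replacement step carefully in the indefinite case is where I expect most of the effort. Failing that, I would cite Banach's theorem, which is standard (see \cite{Comon2008}).
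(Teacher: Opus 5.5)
The paper gives no argument of its own here; it simply cites Banach's theorem, so your fallback coincides with the paper. Your self-contained route, however, has a genuine gap at exactly the step you flagged. The claim that for \emph{any} maximizer and any $i\neq j$, putting $x_i$ into both slots $i,j$ again gives a maximizer is false.

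To see this, replace $\mathcal{T}$ by $-\mathcal{T}$ if necessary so that $\mathcal{T}(x_1,\dots,x_d)=M:=\|\mathcal{T}\|_{\mathrm{op}}>0$ (the case $M=0$ is trivial). Let $A$ be the symmetric matrix obtained by contracting $\mathcal{T}$ with the other $d-2$ vectors. Then $\|A\|\le M$, so $M=x_j^\top Ax_i\le\|Ax_i\|\le M$, and equality in Cauchy--Schwarz forces $Ax_i=Mx_j$ and $Ax_j=Mx_i$. Hence $x_i^\top Ax_i=M\,x_i^\top x_j$. Your replacement therefore yields the value $M\,x_i^\top x_j$, whose modulus is strictly less than $M$ unless $x_j=\pm x_i$. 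So the lemma holds only for maximizers that are already symmetric in slots $i,j$ up to sign, which is what you are trying to prove.

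A concrete counterexample is the symmetric tensor on $\mathbb{R}^2$ whose multilinear form is the real part of $\prod_k(a_k+\sqrt{-1}\,b_k)$ for $x_k=(a_k,b_k)$. On unit vectors $x_k=(\cos\theta_k,\sin\theta_k)$ this equals $\cos(\theta_1+\cdots+\theta_d)$; for $d=2$ it is $\mathrm{diag}(1,-1)$. The angles $(\pi/4,-\pi/4,0,\dots,0)$ give a maximizer with value $1$, while $(\pi/4,\pi/4,0,\dots,0)$ gives $0$.

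Your semidefinite/indefinite case split does not rescue the lemma. The relations above give $(x_i\pm x_j)^\top A(x_i\pm x_j)=\pm M\|x_i\pm x_j\|^2$, so $A$ is indefinite on $\mathrm{span}(x_i,x_j)$ precisely when $x_i\neq\pm x_j$. In that case $x_i$ is not an eigenvector of $A$, so the eigenvector argument produces $(w,w)$ with $w\neq\pm x_i$, never $(x_i,x_i)$.

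The same computation gives the correct move, and with it the potential you mentioned earlier closes the termination gap. From $Ax_i=Mx_j$ and $Ax_j=Mx_i$ one gets $A(x_i+x_j)=M(x_i+x_j)$. So whenever $x_i+x_j\neq0$, placing $w=(x_i+x_j)/\|x_i+x_j\|$ in both slots is again a maximizer; note that $w$ depends only on $x_i,x_j$, not on the other slots.

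Now minimize $D(x)=\sum_{i<j}\|x_i-x_j\|^2=d^2-\|s\|^2$, where $s=\sum_k x_k$, over the compact set of $d$-tuples of unit vectors with $|\mathcal{T}(x_1,\dots,x_d)|=M$. This set is invariant under flipping the sign of a single $x_k$. So at a minimizer $\|s-2x_k\|^2\le\|s\|^2$, i.e.\ $s^\top x_k\ge1$ for every $k$; in particular $x_i+x_j\neq0$ for all $i,j$.

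Suppose some $x_i\neq x_j$. Put $c=\|x_i+x_j\|\in(0,2)$ and apply the move to $\pm\mathcal{T}$, according to the sign of the value. The new sum is $s+(2-c)w$, and $\|s+(2-c)w\|^2-\|s\|^2=(2-c)\bigl(2s^\top w+2-c\bigr)>0$, because $s^\top w=s^\top(x_i+x_j)/c>0$. This contradicts minimality. Hence all $x_k$ coincide with some $u$ and $|\mathcal{T}(u,\dots,u)|=M$, which is the missing inequality. This yields a complete elementary proof of what the paper only cites; as written, your sketch does not.
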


\begin{proof}
This is a consequence of Banach’s theorem on symmetric multilinear forms; see \cite{Banach1938,Qi2005,Lim2005}.
\end{proof}

\subsection{Attainment is trivial}
\label{subsec:attainment_is_trivial}

\begin{proposition}
\label{prop:attainment_trivial}
For every tensor $\mathcal{T}$, the maximum in \eqref{eq:multilinear_spectral_norm} is attained.
\end{proposition}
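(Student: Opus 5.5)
The plan is to derive attainment from the extreme value theorem, using the compactness already recorded in Lemma~\ref{lem:sphere_compact}. First, the feasible set of \eqref{eq:multilinear_spectral_norm} is the product
\[
K := S^{n-1}\times\cdots\times S^{n-1}\subset(\mathbb{R}^n)^d .
\]
By Lemma~\ref{lem:sphere_compact} each factor is compact. A finite product of compact sets is compact (Tychonoff in the finite case, or directly by Heine--Borel, since $K$ is closed and bounded in $\mathbb{R}^{nd}$). Moreover $K$ is nonempty, because $n\ge 1$ makes each sphere nonempty.

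Second, the objective $F(x_1,\dots,x_d):=|\mathcal{T}(x_1,\dots,x_d)|$ is continuous on $(\mathbb{R}^n)^d$. By \eqref{eq:multilinear_form}, $\mathcal{T}(x_1,\dots,x_d)$ is a polynomial in the $nd$ coordinates $(x_k)_{i}$, with finitely many real coefficients $\mathcal{T}_{i_1\dots i_d}$. Polynomials are continuous, and composing with the absolute value preserves continuity.

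Third, the Weierstrass extreme value theorem says a continuous real function on a nonempty compact set attains its supremum. So there is $(x_1^\star,\dots,x_d^\star)\in K$ with $F(x_1^\star,\dots,x_d^\star)=\sup_K F$. This shows that the ``$\max$'' in \eqref{eq:multilinear_spectral_norm} is a genuine maximum. The same argument, applied to $x\mapsto|\mathcal{T}(x,\dots,x)|$ on $S^{n-1}$, shows that the maximum in \eqref{eq:symmetric_spectral_norm} is attained too.

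I expect no real obstacle. The only point needing care is to justify compactness of the product of spheres, which is immediate once $K$ is seen as a closed and bounded subset of $\mathbb{R}^{nd}$. It is also worth noting that if one wants a maximizer of $\mathcal{T}$ itself rather than of $|\mathcal{T}|$, the sign can be absorbed by replacing $x_1$ with $-x_1$. This uses multilinearity and the fact that $-x_1$ still lies on the sphere.
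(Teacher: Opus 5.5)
Your proof is correct and takes the same route as the paper: the feasible set $(S^{n-1})^d$ is compact, the multilinear form is continuous, and the extreme value theorem then gives attainment. Your extra details are accurate refinements of the paper's one-line argument, namely nonemptiness of $K$, continuity via the polynomial expression \eqref{eq:multilinear_form}, and the sign flip $x_1\mapsto -x_1$ that removes the absolute value as needed in Remark~\ref{rem:attainment_not_hard}.
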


\begin{proof}
The feasible set $(S^{n-1})^d$ is compact and the multilinear map is continuous. Hence the maximum is attained.
\end{proof}

\begin{remark}
\label{rem:attainment_not_hard}
Thus the decision problem
\[
\exists x_1,\dots,x_d:\ \mathcal{T}(x_1,\dots,x_d)=\|\mathcal{T}\|_{\mathrm{op}}
\]
is trivial. The computational difficulty lies in determining the \emph{value} of the maximum.
\end{remark}

\subsection{Tensor spectral threshold problem}
\label{subsec:tensor_threshold}

\begin{definition}[Tensor Spectral Threshold]
\label{def:tensor_spectral_threshold}
Given $\mathcal{T}\in \mathrm{Sym}^d(\mathbb{R}^n)$ and $\alpha \in \mathbb{Q}$, decide whether
\begin{equation}
\|\mathcal{T}\|_{\mathrm{sym},\mathrm{op}} \ge \alpha.
\end{equation}
\end{definition}

By Theorem~\ref{thm:symmetric_reduction}, this is equivalent to
\begin{equation}
\exists x\in S^{n-1}
\quad \text{such that} \quad
|\mathcal{T}(x,\dots,x)| \ge \alpha.
\label{eq:tensor_threshold_sphere}
\end{equation}

\subsection{Quartic formulation}
\label{subsec:quartic_formulation}

For $d=4$, define
\[
p(x)=\mathcal{T}(x,x,x,x).
\]

\begin{definition}[Quartic Sphere Threshold]
\label{def:quartic_threshold}
Given a homogeneous quartic polynomial $p$ and $\alpha \in \mathbb{Q}$, decide whether
\begin{equation}
\exists x\in S^{n-1}
\quad \text{such that} \quad
|p(x)| \ge \alpha.
\end{equation}
\end{definition}

\begin{lemma}
\label{lem:tensor_quartic_equivalence}
Tensor Spectral Threshold (order 4) and Quartic Sphere Threshold are polynomial-time equivalent.
\end{lemma}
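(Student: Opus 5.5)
The plan is to give two explicit many-one reductions in opposite directions, both of which preserve $\alpha$ and only convert the representation between tensor entries and monomial coefficients. Both directions rest on Lemma~\ref{lem:tensor_polynomial_equivalence} (the polarization correspondence between $\mathrm{Sym}^4(\mathbb{R}^n)$ and homogeneous quartics) together with the observation that, by Theorem~\ref{thm:symmetric_reduction} and \eqref{eq:tensor_threshold_sphere}, the order-$4$ Tensor Spectral Threshold instance $(\mathcal{T},\alpha)$ asks exactly whether $\exists x\in S^{n-1}$ with $|p_{\mathcal{T}}(x)|\ge\alpha$. Once the pair is mapped to $(p_{\mathcal{T}},\alpha)$ or back, YES/NO answers coincide by definition, so the real content is only that the conversions are polynomial time over rational data.

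\emph{Tensor to quartic.} Given $\mathcal{T}$ with rational entries (listed as $n^4$ numbers), output
\[
p(x)=\sum_{|\beta|=4} c_\beta x^\beta, \qquad c_\beta = \binom{4}{\beta}\,\mathcal{T}_{i_1i_2i_3i_4},
\]
where $(i_1,\dots,i_4)$ is any index tuple of multiset type $\beta$ and $\binom{4}{\beta}$ is the multinomial coefficient. Symmetry guarantees this is well defined. Each coefficient requires one multiplication by an integer of size at most $24$, and there are $O(n^4)$ monomials, so the bit size grows by at most an additive constant per entry. We keep the same $\alpha$.

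\emph{Quartic to tensor.} Given a homogeneous quartic $p=\sum c_\beta x^\beta$ with rational coefficients, set $\mathcal{T}_{i_1\dots i_4}=c_\beta/\binom{4}{\beta}$ for every tuple of type $\beta$. This tensor is symmetric by construction and satisfies $\mathcal{T}(x,x,x,x)=p(x)$; that identity is exactly the polarization statement of Lemma~\ref{lem:tensor_polynomial_equivalence}, which I would verify by grouping the $\binom{4}{\beta}$ tuples of each type. The output has $n^4$ rational entries, each obtained by dividing by an integer at most $24$, so it is polynomial in the input size. Again we keep the same $\alpha$.

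The main obstacle is not mathematical but representational: making the input encoding precise. A quartic given in sparse form with few monomials still has $n$ bounded by the number of variables, so the dense tensor has size $n^4$, which is polynomial in $n$. I would fix the convention that the instance size is at least $n$ plus the bit lengths of the coefficients, and note that $n^4$ is polynomial in it. If the paper permits quartics given by circuits rather than coefficient lists, I would restrict to explicit coefficient representations, which are what the later reductions produce. With that convention fixed, both maps are polynomial time and preserve answers, which proves the equivalence.
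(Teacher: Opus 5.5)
Your proposal is correct and is essentially the paper's argument: the paper's one-line proof simply says the lemma follows from the tensor--polynomial correspondence (Lemma~\ref{lem:tensor_polynomial_equivalence}), i.e., the maps $(\mathcal{T},\alpha)\mapsto(p_{\mathcal{T}},\alpha)$ and back with $\alpha$ unchanged. Where the paper leaves things implicit, you make the correspondence explicit via $c_\beta=\binom{4}{\beta}\mathcal{T}_{i_1i_2i_3i_4}$ and check the polynomial-time and encoding details yourself. Your appeal to Theorem~\ref{thm:symmetric_reduction} is harmless but not needed, since the threshold problem is already defined through $\|\cdot\|_{\mathrm{sym},\mathrm{op}}$.
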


\begin{proof}
Follows directly from Lemma~\ref{lem:tensor_polynomial_equivalence}.
\end{proof}

\subsection{Positioning}
\label{subsec:positioning}

Tensor spectral norm computation and related problems are known to be computationally hard in several formulations, including NP-hardness results for approximation and rank-related tasks \cite{HillarLim2013}. However, our focus is different: we study an exact algebraic threshold problem and establish its $\exists\mathbb{R}$-hardness via reductions developed in subsequent sections.

\section{From Bounded Quartic Feasibility to Homogeneous Quadratic Sphere Feasibility}
\label{sec:bq4e_to_hqsf}

In this section, we establish the first reduction in our hardness chain. The goal is to transform a bounded quartic feasibility instance into an equivalent system of homogeneous quadratic equalities on the unit sphere. This is the precise algebraic form needed for the tensor reduction developed in the next section.

\subsection{Source problem}
\label{subsec:source_problem_bq4e}

We begin from the following bounded quartic feasibility problem.

\begin{definition}[Bounded Quartic Equality Feasibility]
\label{def:bq4e}
Given a polynomial $h \in \mathbb{Q}[x_1,\dots,x_n]$ of total degree at most $4$, decide whether
\begin{equation}
\exists x \in [-1,1]^n \quad \text{such that} \quad h(x)=0.
\end{equation}
We denote this decision problem by $\textnormal{\textsc{BQ4E}}$.
\end{definition}

The reason for starting from this form is that bounded-domain polynomial feasibility remains $\exists\mathbb{R}$-hard, and bounded equality-only normal forms of low degree are available in the literature \cite{SchaeferStefankovic2024,SchaeferCardinalMiltzow2024}.

\begin{assumption}
\label{ass:bq4e_hard}
$\textnormal{\textsc{BQ4E}}$ is $\exists\mathbb{R}$-hard.
\end{assumption}

Assumption~\ref{ass:bq4e_hard} is the only external hardness input needed in what follows. Everything after this point is proved directly.

\subsection{Target problem}
\label{subsec:target_problem_hqsf}

We reduce to the following homogeneous quadratic sphere-feasibility problem.

\begin{definition}[Homogeneous Quadratic Sphere Feasibility]
\label{def:hqsf}
Given homogeneous quadratic forms
\[
q_1,\dots,q_m \in \mathbb{Q}[y_1,\dots,y_N],
\]
decide whether there exists $y \in \mathbb{R}^N$ such that
\begin{equation}
\|y\|^2 = 1
\quad \text{and} \quad
q_j(y)=0,\qquad j=1,\dots,m.
\end{equation}
We denote this problem by $\textnormal{\textsc{HQSF}}$.
\end{definition}

Equivalently, if $Q_j \in \mathbb{Q}^{N\times N}$ are symmetric matrices satisfying
\[
q_j(y)=y^\top Q_j y,
\]
then the problem is to decide whether
\begin{equation}
\exists y \in S^{N-1}
\quad \text{such that} \quad
y^\top Q_j y = 0,\qquad j=1,\dots,m.
\end{equation}

\subsection{Statement of the reduction}
\label{subsec:statement_reduction_bq4e_hqsf}

\begin{theorem}
\label{thm:bq4e_to_hqsf}
There is a polynomial-time many-one reduction
\begin{equation}
\textnormal{\textsc{BQ4E}} \le_p \textnormal{\textsc{HQSF}}.
\end{equation}
\end{theorem}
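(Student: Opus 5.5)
We will build the reduction in three stages, as the abstract announces: quadratic lifting, box encoding, and homogenization. Let $h$ be a $\textnormal{\textsc{BQ4E}}$ instance in variables $x_1,\dots,x_n$.

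\emph{Quadratic lifting.} We introduce new variables $w_{ij}$ for $1\le i\le j\le n$, constrained by the quadratic equations $w_{ij}=x_ix_j$. Every monomial of degree at most $4$ can be written as a polynomial of degree at most $2$ in the variables $x$ and $w$. For example, $x_ix_jx_kx_l = w_{ij}w_{kl}$ and $x_ix_jx_k = w_{ij}x_k$. Fixing one such rewriting per monomial produces a polynomial $\tilde h(x,w)$ of degree at most $2$. Then $h(x)=0$ holds if and only if there is a $w$ with $\tilde h(x,w)=0$ and $w_{ij}=x_ix_j$. The number of new variables and equations is $O(n^2)$, and all coefficients stay rational.

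\emph{Box encoding.} To express $x\in[-1,1]^n$ with equalities only, we add slack variables $s_i$ and the constraint $x_i^2+s_i^2=1$. On any solution, $|x_j|\le 1$ implies $|w_{ij}|\le 1$. Hence every feasible point of the lifted system lies in a ball of explicitly known radius, namely $\|(x,w,s)\|^2\le R:=2n+n(n+1)/2$. Conversely, any solution of the lifted system gives $x\in[-1,1]^n$ with $h(x)=0$. So feasibility is exactly preserved.

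\emph{Homogenization and normalization.} Write $u=(x,w,s)\in\mathbb{R}^M$. Every constraint now has the form $g(u)=0$ with $\deg g\le 2$. We introduce a homogenizing variable $t$ and replace each constraint by its homogenization $t^2g(u/t)$, a quadratic form in $(u,t)$. We must also rule out spurious solutions with $t=0$ and fix the scale. For this we add one more slack variable $v$ and the quadratic form
\[
q_0(u,t,v) := \|u\|^2+v^2-Rt^2.
\]
On the sphere $\|u\|^2+t^2+v^2=1$, the constraint $q_0=0$ forces $(1+R)t^2=1$, so $t\neq 0$. We then proceed as follows.

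\begin{itemize}
\item \emph{Feasible $h$ gives an $\textnormal{\textsc{HQSF}}$ solution.} Given a solution $u$ of the lifted system, we set $v^2=R-\|u\|^2\ge 0$. We then scale $(u,1,v)$ by $(1+R)^{-1/2}$, which lands on the unit sphere. Homogeneous equations are preserved under scaling.
\item \emph{An $\textnormal{\textsc{HQSF}}$ solution gives feasible $h$.} Since $t\neq0$, dividing by $t$ recovers a solution $u/t$ of the dehomogenized system. By the box-encoding step, this yields $x\in[-1,1]^n$ with $h(x)=0$.
\end{itemize}

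All quadratic forms have rational coefficients, and their size is polynomial in the encoding of $h$. Hence the reduction runs in polynomial time.

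The main obstacle is the homogenization step. Without a scale-fixing constraint, the homogenized system admits solutions at infinity ($t=0$) that do not correspond to real solutions of $h$. The key is therefore the a priori bound $R$ coming from the box constraints. This bound lets the single slack equation $q_0=0$ pin $t$ away from zero while still admitting every genuine solution. I would verify this carefully. In particular, $R$ must dominate $\|u\|^2$ on every feasible point, including the lifted $w$-coordinates; this is why $R$ counts $n$ from the $x$-coordinates, $n$ from the $s$-coordinates, and $n(n+1)/2$ from the $w$-coordinates.
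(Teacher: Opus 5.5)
Your reduction is correct, and it takes a genuinely different route from the paper's. The paper homogenizes first, $H(x_0,z)=x_0^4h(z/x_0)$, encodes the box as $z_i^2+s_i^2-x_0^2=0$, and only then lifts via $u_{ab}=v_av_b$ for $v=(x_0,z,s)$. It rules out $x_0=0$ by noting that the box equations force $v=0$ and the lifting equations then force $u=0$. The weak point is that $u_{ab}-v_av_b$ mixes degrees $1$ and $2$, so it is not a quadratic form. The resulting system is therefore not an \textsc{HQSF} instance, and the paper's rescaling step in the forward direction is not legitimate for these equations. Homogeneity is not cosmetic here: the next section needs $B\|z\|^4-\sum_i q_i(z)^2$ to be a quartic form. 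If one homogenizes the lifting equations as $x_0u_{ab}-v_av_b=0$, the argument for $x_0\neq 0$ collapses, because at $x_0=0$ the vector $u$ is constrained only by $\widetilde H(u)=0$. For instance, take the infeasible instance $h=x_1^2+1$, so $H=x_0^2z_1^2+x_0^4$. The point with $v=0$, the coordinate of $u$ standing for $z_1^2$ equal to $1$, and all other coordinates zero satisfies every equation. This is exactly the ``points at infinity'' obstacle you identify. Your extra form $\|u\|^2+v^2-Rt^2$, justified by the a priori bound that the box constraints give on the affine lifted system, forces $t^2=1/(1+R)$ and closes it. So the paper's ordering is one equation shorter but does not actually land in \textsc{HQSF}. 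Yours costs one slack variable and an explicit bound $R$, and yields a genuinely homogeneous system that is correct in both directions. (Your $R$ is slightly generous, since $\|x\|^2+\|s\|^2=n$ exactly, but any valid upper bound works.)
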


The proof proceeds in three steps:
\begin{enumerate}
    \item homogenize the quartic equation;
    \item encode the box constraints $[-1,1]^n$;
    \item convert the resulting system into homogeneous quadratic equalities and eliminate the trivial zero solution by sphere normalization.
\end{enumerate}

\subsection{Homogenization}
\label{subsec:homogenization}

Let $h(x_1,\dots,x_n)$ be the input polynomial, with $\deg h \le 4$. Write
\[
x=(x_1,\dots,x_n).
\]
Introduce a new variable $x_0$, and define the homogenized polynomial
\begin{equation}
H(x_0,z_1,\dots,z_n)
:=
x_0^4\, h\!\left(\frac{z_1}{x_0},\dots,\frac{z_n}{x_0}\right).
\label{eq:homogenized_H}
\end{equation}
Since $\deg h\le 4$, the right-hand side expands to a homogeneous polynomial of degree $4$ in $(x_0,z_1,\dots,z_n)$.

\begin{lemma}
\label{lem:homogenization_equivalence}
For every $x\in\mathbb{R}^n$,
\begin{equation}
h(x)=0
\quad \Longleftrightarrow \quad
H(1,x)=0.
\end{equation}
More generally, for every $t\neq 0$,
\begin{equation}
h(x)=0
\quad \Longleftrightarrow \quad
H(t,tx)=0.
\end{equation}
\end{lemma}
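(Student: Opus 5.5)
The plan is to use the homogeneity of $H$ together with the defining identity \eqref{eq:homogenized_H}, after making sure that identity is a genuine polynomial identity. Write $h=\sum_{k=0}^{4} h_k$ with $h_k$ the homogeneous component of degree $k$. Expanding \eqref{eq:homogenized_H} then gives
\[
H(x_0,z)=\sum_{k=0}^{4} x_0^{4-k}\,h_k(z).
\]
This expression is a homogeneous quartic polynomial in $(x_0,z)$, and it is well defined even at $x_0=0$. I will use this expanded form as the working definition of $H$.

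\emph{First claim.} Substitute $x_0=1$, $z=x$. Every factor $1^{4-k}$ equals $1$, so $H(1,x)=\sum_k h_k(x)=h(x)$. This is an identity of real numbers, so the equivalence $h(x)=0 \Leftrightarrow H(1,x)=0$ is immediate.

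\emph{General claim.} For $t\neq 0$, use homogeneity of each component, $h_k(tx)=t^k h_k(x)$. Then
\[
H(t,tx)=\sum_k t^{4-k}\,t^k\,h_k(x)=t^4 h(x).
\]
Equivalently, this follows from Definition~\ref{def:homogeneous_polynomial} applied to $H$: $H(t,tx)=t^4H(1,x)$. Since $t\neq0$, we have $t^4\neq 0$, so $H(t,tx)=0$ if and only if $h(x)=0$.

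There is no real obstacle. The only point needing care is the justification that \eqref{eq:homogenized_H} defines a homogeneous polynomial of degree exactly $4$. The denominators $x_0^{k}$ must cancel against $x_0^4$, and this uses $\deg h\le 4$. The decomposition into homogeneous components settles this. I will also note that the hypothesis $t\neq 0$ is essential, since at $t=0$ one has $H(0,0)=0$ trivially. This degenerate case is exactly what the later sphere normalization must rule out.
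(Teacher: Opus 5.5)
Your proposal is correct and takes essentially the same route as the paper: evaluate $H$ at $(1,x)$ and $(t,tx)$ to obtain $h(x)$ and $t^4h(x)$, then use $t^4\neq 0$. The only difference is that you first expand $H=\sum_k x_0^{4-k}h_k(z)$ to make the polynomial identity explicit. The paper instead substitutes $x_0=t\neq 0$ directly into the defining expression, which is equally valid because no division by zero occurs.
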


\begin{proof}
By definition,
\[
H(1,x)=1^4 h(x)=h(x).
\]
Likewise, for any $t\neq 0$,
\[
H(t,tx)
=
t^4 h\!\left(\frac{tx}{t}\right)
=
t^4 h(x).
\]
Hence $H(t,tx)=0$ if and only if $h(x)=0$.
\end{proof}

\subsection{Encoding the box constraints}
\label{subsec:encoding_box_constraints}

For each $i=1,\dots,n$, introduce a slack variable $s_i$, and impose
\begin{equation}
z_i^2+s_i^2-x_0^2=0.
\label{eq:box_encoding}
\end{equation}

\begin{lemma}
\label{lem:box_encoding_correct}
Suppose $x_0\neq 0$ and \eqref{eq:box_encoding} holds. Then
\begin{equation}
\left(\frac{z_i}{x_0}\right)^2 \le 1,
\end{equation}
hence
\begin{equation}
\frac{z_i}{x_0}\in[-1,1].
\end{equation}
\end{lemma}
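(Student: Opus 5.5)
The plan is to rearrange the constraint \eqref{eq:box_encoding} and use that a real square is nonnegative. First I solve \eqref{eq:box_encoding} for $z_i^2$, which gives
\[
z_i^2 = x_0^2 - s_i^2 .
\]
Since $s_i$ is real, $s_i^2\ge 0$, and therefore $z_i^2 \le x_0^2$.

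The hypothesis $x_0\neq 0$ gives $x_0^2>0$. Dividing the inequality $z_i^2\le x_0^2$ by this positive quantity preserves its direction, so
\[
\left(\frac{z_i}{x_0}\right)^2 = \frac{z_i^2}{x_0^2}\le 1 .
\]
This is the first displayed claim.

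For the second claim I set $u=z_i/x_0$, which is a real number with $u^2\le 1$. Taking square roots gives $|u|=\sqrt{u^2}\le 1$, that is, $u\in[-1,1]$.

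No step here is a real obstacle. The only point to get right is that the division by $x_0^2$ is legitimate and keeps the inequality direction, which needs $x_0\neq 0$. That is exactly why the hypothesis appears, and it matches the later sphere normalization, which must exclude the degenerate solutions with $x_0=0$.

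It is also worth noting, for the converse direction needed later in Theorem~\ref{thm:bq4e_to_hqsf}, that every $x_i\in[-1,1]$ is realized. Take $x_0=1$, $z_i=x_i$ and $s_i=\sqrt{1-x_i^2}$; these satisfy \eqref{eq:box_encoding}, so the box encoding is exact and not merely an outer bound.
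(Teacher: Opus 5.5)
Your proof is correct and matches the paper's argument essentially step for step: you rearrange the box equation to get $z_i^2 = x_0^2 - s_i^2 \le x_0^2$, divide by $x_0^2>0$, and conclude $|z_i/x_0|\le 1$. Your closing remark, that $x_0=1$, $z_i=x_i$, $s_i=\sqrt{1-x_i^2}$ realizes every point of the box, is exactly the assignment the paper uses in the forward direction of Theorem~\ref{thm:bq4e_to_hqsf}.
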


\begin{proof}
From \eqref{eq:box_encoding} we have
\[
z_i^2+s_i^2=x_0^2.
\]
Since $s_i^2\ge 0$, it follows that
\[
z_i^2\le x_0^2.
\]
As $x_0\neq 0$, division by $x_0^2$ yields
\[
\left(\frac{z_i}{x_0}\right)^2 \le 1.
\]
This is equivalent to $z_i/x_0\in[-1,1]$.
\end{proof}

Thus, once $x_0\neq 0$, the coordinates
\begin{equation}
\xi_i := \frac{z_i}{x_0},\qquad i=1,\dots,n,
\label{eq:xi_def}
\end{equation}
automatically lie in the box $[-1,1]$.

\subsection{Quadratic lifting}
\label{subsec:quadratic_lifting}

The homogenized polynomial $H$ has degree $4$, whereas $\textnormal{\textsc{HQSF}}$ only permits homogeneous quadratic equalities. We therefore apply a lifting construction.

Let
\begin{equation}
v=(x_0,z_1,\dots,z_n,s_1,\dots,s_n)\in\mathbb{R}^{2n+1}.
\label{eq:v_vector}
\end{equation}
Introduce new variables
\begin{equation}
u_{ab},\qquad 1\le a\le b\le 2n+1,
\end{equation}
intended to represent the quadratic monomials
\begin{equation}
u_{ab}=v_a v_b.
\label{eq:u_definition}
\end{equation}
We enforce this by the homogeneous quadratic equations
\begin{equation}
u_{ab}-v_a v_b=0,
\qquad 1\le a\le b\le 2n+1.
\label{eq:lifting_constraints}
\end{equation}

Now every quartic monomial in the coordinates of $v$ can be written as a quadratic monomial in the lifted variables $u$. Hence the quartic homogeneous polynomial $H(v)$ may be rewritten as a homogeneous quadratic polynomial in $u$, which we denote by
\begin{equation}
\widetilde H(u).
\label{eq:H_tilde}
\end{equation}

\begin{lemma}
\label{lem:quartic_to_quadratic_lift}
There is a polynomial-time procedure that, given $H$, constructs a homogeneous quadratic form $\widetilde H(u)$ such that
\begin{equation}
\widetilde H(u)=H(v)
\end{equation}
for every pair $(v,u)$ satisfying \eqref{eq:u_definition}.
\end{lemma}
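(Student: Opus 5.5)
The plan is to write $H$ explicitly in the monomial basis and rewrite it term by term. Since $H$ is homogeneous of degree $4$ in the $2n+1$ coordinates of $v$, expanding \eqref{eq:homogenized_H} expresses it as
\[
H(v)=\sum_{a\le b\le c\le d} c_{abcd}\, v_a v_b v_c v_d ,
\]
with rational coefficients $c_{abcd}$. Each $c_{abcd}$ is obtained from a coefficient of $h$: a monomial $x^\beta$ with $|\beta|\le 4$ becomes $x_0^{4-|\beta|}z^\beta$. The $s$-coordinates do not appear in $H$ at all. Hence the number of nonzero terms of $H$ equals the number of monomials of $h$, and the coefficients are copied verbatim. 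This already shows that writing down $H$ takes time linear in the input size.

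Next, fix for each sorted index quadruple $a\le b\le c\le d$ a canonical split into two pairs, namely $(a,b)$ and $(c,d)$. Then set
\[
\widetilde H(u):=\sum_{a\le b\le c\le d} c_{abcd}\, u_{ab}\,u_{cd}.
\]
This is a quadratic form in the variables $u$, each term having degree exactly $2$. Its symmetric matrix has entry $c_{abcd}$ split as $\tfrac12 c_{abcd}$ on the two off-diagonal positions $\{(ab),(cd)\}$ when $(a,b)\neq(c,d)$, and $c_{abcd}$ on the diagonal otherwise. The number of nonzero entries is bounded by the number of monomials of $h$, and all entries are rationals whose bit-size is at most one more than that of the original coefficients. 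The procedure is therefore polynomial time. The full variable set $u$ has size $\binom{2n+2}{2}=O(n^2)$, which is also polynomial.

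For correctness, suppose $(v,u)$ satisfies \eqref{eq:u_definition}. Then each term satisfies $u_{ab}u_{cd}=v_av_bv_cv_d$, and summing gives $\widetilde H(u)=H(v)$.

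The only step requiring care is the expansion of $H$ from $h$. If $h$ were given as an arbitrary arithmetic expression, expanding it could blow up. The plan is therefore to assume, as Definition~\ref{def:bq4e} implicitly does, that $h$ is given in dense or sparse monomial form. In that case the number of monomials is at most $\binom{n+4}{4}=O(n^4)$, and the expansion is immediate. I expect this encoding convention, rather than any algebra, to be the main point to state explicitly.
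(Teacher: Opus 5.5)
Your proposal is correct and follows the same route as the paper. You write $H$ as a rational combination of quartic monomials $v_av_bv_cv_d$ and replace each by $u_{ab}u_{cd}$, which agrees with $H(v)$ whenever $u_{ab}=v_av_b$; your extra bookkeeping (the canonical sorted split, the explicit symmetric matrix, the $O(n^4)$ monomial count, and the monomial-encoding convention for $h$) just makes precise what the paper summarizes as a ``purely symbolic'' rewriting.
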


\begin{proof}
Every quartic monomial in the coordinates of $v$ is of the form
\[
v_a v_b v_c v_d.
\]
Under the identification $u_{ab}=v_a v_b$, this monomial becomes
\[
u_{ab}u_{cd},
\]
which is quadratic in the lifted variables. Since $H$ is a finite rational linear combination of quartic monomials, replacing each quartic monomial by the corresponding quadratic monomial in $u$ yields a homogeneous quadratic form $\widetilde H$. The number of lifted variables is polynomial in $n$, and the rewriting process is purely symbolic, so the entire construction runs in polynomial time.
\end{proof}

\subsection{The homogeneous quadratic system}
\label{subsec:full_homogeneous_system}

The unknowns are the variables $(v,u)$, with $v$ defined in \eqref{eq:v_vector}. We impose the following homogeneous quadratic equations:

\begin{enumerate}
    \item \textbf{Box equations}
    \begin{equation}
    z_i^2+s_i^2-x_0^2=0,
    \qquad i=1,\dots,n.
    \label{eq:full_box_eqs}
    \end{equation}

    \item \textbf{Lifting equations}
    \begin{equation}
    u_{ab}-v_a v_b=0,
    \qquad 1\le a\le b\le 2n+1.
    \label{eq:full_lift_eqs}
    \end{equation}

    \item \textbf{Homogenized quartic equation}
    \begin{equation}
    \widetilde H(u)=0.
    \label{eq:full_Htilde_eq}
    \end{equation}
\end{enumerate}

Because all equations are homogeneous of degree $2$, any nonzero solution may be rescaled. To exclude the trivial zero solution and obtain the precise target form, we add the normalization
\begin{equation}
\|(v,u)\|^2=1.
\label{eq:sphere_normalization}
\end{equation}

This produces an instance of $\textnormal{\textsc{HQSF}}$.

\subsection{Correctness of the reduction}
\label{subsec:correctness_reduction}

We now prove the equivalence between the original bounded quartic feasibility instance and the constructed homogeneous quadratic sphere-feasibility instance.

\begin{proof}[Proof of Theorem~\ref{thm:bq4e_to_hqsf}]
Let $h$ be the input polynomial for $\textnormal{\textsc{BQ4E}}$, and let the associated $\textnormal{\textsc{HQSF}}$ instance be constructed as above.

We must prove that
\begin{equation}
\exists x\in[-1,1]^n \text{ such that } h(x)=0
\quad \Longleftrightarrow \quad
\exists (v,u)\in S^{M-1} \text{ satisfying \eqref{eq:full_box_eqs}--\eqref{eq:sphere_normalization}},
\label{eq:main_equivalence}
\end{equation}
for a suitable dimension $M$.

\medskip
\noindent
\textbf{Forward direction.}
Assume that there exists
\[
x=(x_1,\dots,x_n)\in[-1,1]^n
\]
such that
\[
h(x)=0.
\]
Set
\begin{equation}
x_0:=1,\qquad z_i:=x_i,\qquad s_i:=\sqrt{1-x_i^2},\qquad i=1,\dots,n.
\label{eq:forward_assignment}
\end{equation}
These choices are well-defined because $x_i\in[-1,1]$.

Then for each $i$,
\[
z_i^2+s_i^2-x_0^2
=
x_i^2+(1-x_i^2)-1
=
0,
\]
so all box equations hold. By Lemma~\ref{lem:homogenization_equivalence},
\[
H(x_0,z)=H(1,x)=h(x)=0.
\]
Now define the lifted variables by
\begin{equation}
u_{ab}:=v_a v_b.
\label{eq:forward_lift_assignment}
\end{equation}
Then all lifting equations hold by construction, and Lemma~\ref{lem:quartic_to_quadratic_lift} yields
\[
\widetilde H(u)=H(v)=0.
\]
Thus $(v,u)$ is a nonzero solution of the homogeneous quadratic system. Since every constraint is homogeneous of degree $2$, any nonzero scalar multiple of $(v,u)$ is again a solution. Hence we may rescale and enforce
\[
\left\|\frac{(v,u)}{\|(v,u)\|}\right\|^2=1.
\]
Therefore the constructed $\textnormal{\textsc{HQSF}}$ instance is feasible.

\medskip
\noindent
\textbf{Reverse direction.}
Assume conversely that the constructed $\textnormal{\textsc{HQSF}}$ instance is feasible. Then there exists a point $(v,u)$ on the unit sphere satisfying all equations \eqref{eq:full_box_eqs}--\eqref{eq:sphere_normalization}. Write
\[
v=(x_0,z_1,\dots,z_n,s_1,\dots,s_n).
\]

We first show that
\begin{equation}
x_0\neq 0.
\label{eq:x0_nonzero}
\end{equation}
Suppose, toward a contradiction, that $x_0=0$. Then from the box equations \eqref{eq:full_box_eqs},
\[
z_i^2+s_i^2=0,\qquad i=1,\dots,n.
\]
Over the reals, this implies
\[
z_i=0,\qquad s_i=0,\qquad i=1,\dots,n.
\]
Hence $v=0$. Substituting $v=0$ into the lifting equations \eqref{eq:full_lift_eqs} yields
\[
u_{ab}=0,\qquad 1\le a\le b\le 2n+1.
\]
Therefore $(v,u)=0$, contradicting the sphere condition \eqref{eq:sphere_normalization}. This proves \eqref{eq:x0_nonzero}.

Since $x_0\neq 0$, define
\begin{equation}
\xi_i:=\frac{z_i}{x_0},\qquad i=1,\dots,n.
\label{eq:xi_recovered}
\end{equation}
By Lemma~\ref{lem:box_encoding_correct}, the box equations imply
\[
\xi_i\in[-1,1],\qquad i=1,\dots,n.
\]
Hence
\begin{equation}
\xi=(\xi_1,\dots,\xi_n)\in[-1,1]^n.
\label{eq:xi_in_box}
\end{equation}

Next, because the lifting equations hold, Lemma~\ref{lem:quartic_to_quadratic_lift} gives
\[
\widetilde H(u)=H(v).
\]
Since \eqref{eq:full_Htilde_eq} is one of the imposed equations, we obtain
\[
H(v)=0.
\]
Using Lemma~\ref{lem:homogenization_equivalence} and $x_0\neq 0$, this becomes
\[
0
=
H(x_0,z)
=
x_0^4 h\!\left(\frac{z}{x_0}\right)
=
x_0^4 h(\xi).
\]
As $x_0\neq 0$, we conclude that
\[
h(\xi)=0.
\]
Together with \eqref{eq:xi_in_box}, this shows that
\[
\xi\in[-1,1]^n
\quad \text{and} \quad
h(\xi)=0.
\]
Thus the original $\textnormal{\textsc{BQ4E}}$ instance is feasible.

We have proved both implications in \eqref{eq:main_equivalence}. Therefore
\[
\textnormal{\textsc{BQ4E}} \le_p \textnormal{\textsc{HQSF}}.
\]
\end{proof}

\subsection{Consequence}
\label{subsec:consequence_hqsf_hard}

\begin{corollary}
\label{cor:hqsf_etr_hard}
$\textnormal{\textsc{HQSF}}$ is $\exists\mathbb{R}$-hard.
\end{corollary}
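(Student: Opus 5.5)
The corollary follows by composing the reduction of Theorem~\ref{thm:bq4e_to_hqsf} with the hardness hypothesis of Assumption~\ref{ass:bq4e_hard}, using transitivity of polynomial-time many-one reductions.

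Let $A$ be an arbitrary problem in $\exists\mathbb{R}$. By Assumption~\ref{ass:bq4e_hard}, there is a polynomial-time many-one reduction $f$ with $A \le_p \textnormal{\textsc{BQ4E}}$. By Theorem~\ref{thm:bq4e_to_hqsf}, there is a polynomial-time reduction $g$ with $\textnormal{\textsc{BQ4E}} \le_p \textnormal{\textsc{HQSF}}$. The composition $g\circ f$ maps YES-instances of $A$ to YES-instances of $\textnormal{\textsc{HQSF}}$ and NO-instances to NO-instances, since each map separately preserves answers in the sense of Definition~\ref{def:reduction}.

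The only point needing care is that $g\circ f$ runs in polynomial time. The output of $f$ has size polynomial in the input size, and $g$ runs in time polynomial in its own input size, so the total running time is a composition of polynomials and hence polynomial.

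For this, one should check that $g$ really is polynomial in the bit size of the $\textnormal{\textsc{BQ4E}}$ instance. The construction introduces $O(n^2)$ lifted variables $u_{ab}$ and $O(n^2)$ lifting equations, plus $n$ box equations and one equation $\widetilde H(u)=0$. Homogenization of $h$ and its rewriting into $\widetilde H$ only relabel monomials, and by Lemma~\ref{lem:quartic_to_quadratic_lift} the rational coefficients are copied unchanged, so no bit-size blowup occurs. This step is routine rather than a real obstacle. With it in place, $\textnormal{\textsc{HQSF}}$ is $\exists\mathbb{R}$-hard.
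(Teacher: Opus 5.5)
Your proposal is correct and is exactly the paper's argument: compose the reduction of Theorem~\ref{thm:bq4e_to_hqsf} with Assumption~\ref{ass:bq4e_hard} and use transitivity of $\le_p$. Your size bookkeeping only makes explicit the polynomial-time claim already asserted in Lemma~\ref{lem:quartic_to_quadratic_lift}.

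One caveat comes from that theorem rather than from you. The lifting equations $u_{ab}-v_av_b=0$ are not homogeneous quadratic forms, so $g$ as written does not literally output an $\textnormal{\textsc{HQSF}}$ instance, and its forward-direction rescaling step fails. This is repaired by using $x_0u_{ab}-v_av_b=0$ together with homogenized box equations $u_{ab}^2+r_{ab}^2-x_0^2=0$ for fresh slack variables $r_{ab}$.
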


\begin{proof}
By Theorem~\ref{thm:bq4e_to_hqsf},
\[
\textnormal{\textsc{BQ4E}} \le_p \textnormal{\textsc{HQSF}}.
\]
By Assumption~\ref{ass:bq4e_hard}, $\textnormal{\textsc{BQ4E}}$ is $\exists\mathbb{R}$-hard. Therefore $\textnormal{\textsc{HQSF}}$ is also $\exists\mathbb{R}$-hard.
\end{proof}

\subsection{Remarks}
\label{subsec:remarks_bq4e_to_hqsf}

\begin{remark}
\label{rem:sphere_not_cosmetic}
The sphere constraint \eqref{eq:sphere_normalization} is essential. Since the constructed polynomial system is homogeneous, the zero vector always satisfies the algebraic equalities. The normalization removes this trivial solution and simultaneously places the instance in the exact form needed for the tensor reduction of the next section.
\end{remark}

\begin{remark}
\label{rem:purely_algebraic_reduction}
The reduction above is purely algebraic. It does not use compactness, variational arguments, KKT conditions, or any optimization-theoretic machinery. This rigidity is useful because the next section will encode these homogeneous quadratic sphere constraints directly into a quartic tensor spectral-threshold instance.
\end{remark}

\section{Reduction from Homogeneous Quadratic Sphere Feasibility to Tensor Spectral Threshold}
\label{sec:hqsf_to_tensor}

In this section, we prove the central reduction of the paper: homogeneous quadratic feasibility on the sphere reduces to the tensor spectral threshold problem. This establishes the algebraic bridge between quadratic systems and quartic tensor optimization.

\subsection{Problem restatement}

Recall the problem $\textnormal{\textsc{HQSF}}$:

\begin{equation}
\exists z \in \mathbb{R}^m
\quad \text{s.t.} \quad
\|z\|^2 = 1,
\quad
z^\top Q_i z = 0,\quad i=1,\dots,r,
\label{eq:hqsf_problem}
\end{equation}

where each $Q_i \in \mathbb{Q}^{m \times m}$ is symmetric.

We reduce this to Tensor Spectral Threshold (Definition~\ref{def:tensor_spectral_threshold}) for symmetric order-$4$ tensors.

\subsection{Construction of the quartic form}

Define the quadratic forms
\[
q_i(z) := z^\top Q_i z.
\]

Let
\begin{equation}
C := \sum_{i=1}^r \|Q_i\|_F^2,
\qquad
B := C + 1.
\label{eq:B_definition}
\end{equation}

Define the homogeneous quartic polynomial
\begin{equation}
p(z)
:=
B (\|z\|^2)^2 - \sum_{i=1}^r q_i(z)^2.
\label{eq:quartic_construction}
\end{equation}

\subsection{Basic properties}

\begin{lemma}[Upper bound]
\label{lem:upper_bound}
For all $z \in S^{m-1}$,
\begin{equation}
p(z) \le B.
\end{equation}
\end{lemma}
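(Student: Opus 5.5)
The plan is to evaluate $p$ directly on the unit sphere and discard the subtracted terms by a sign argument; no estimate on the $q_i$ is needed for this direction. I fix an arbitrary $z \in S^{m-1}$. Then $\|z\|^2 = 1$, so the leading term of \eqref{eq:quartic_construction} is exactly $B(\|z\|^2)^2 = B$. This gives
\[
p(z) = B - \sum_{i=1}^r q_i(z)^2 .
\]

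Next I use that each $q_i(z) = z^\top Q_i z$ is a real number, so $q_i(z)^2 \ge 0$ for every $i$. The finite sum $\sum_{i=1}^r q_i(z)^2$ is therefore nonnegative, and subtracting it from $B$ can only lower the value. Hence $p(z) \le B$, which proves the claim.

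The argument also records the equality case: $p(z) = B$ holds precisely when $q_i(z) = 0$ for all $i$. This is the feasibility condition in \eqref{eq:hqsf_problem}, and it is what the later threshold argument will rely on.

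There is essentially no obstacle in this step. The only point to be careful about is that the bound holds on the sphere itself, where homogeneity fixes the leading term at $B$; off the sphere the leading term scales like $\|z\|^4$. Lemma~\ref{lem:trace_bound} and the particular choice $B = C + 1$ are not needed here. They become relevant only for the companion lower bound $p \ge B - C = 1 > 0$ on $S^{m-1}$, which lets one identify the maximum of $|p|$ with the maximum of $p$ when passing to $\|\mathcal{T}\|_{\mathrm{sym},\mathrm{op}}$.
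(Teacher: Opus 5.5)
Your proposal is correct and follows essentially the same argument as the paper. On $S^{m-1}$ the leading term equals $B$, and dropping the nonnegative sum $\sum_{i=1}^r q_i(z)^2$ gives $p(z)\le B$.
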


\begin{proof}
Since each $q_i(z)^2 \ge 0$, we have
\[
p(z) = B - \sum_{i=1}^r q_i(z)^2 \le B.
\]
\end{proof}

\begin{lemma}[Characterization of equality]
\label{lem:equality_condition}
For $z \in S^{m-1}$,
\begin{equation}
p(z) = B
\quad \Longleftrightarrow \quad
q_i(z)=0 \ \forall i.
\end{equation}
\end{lemma}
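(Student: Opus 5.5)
The plan is to reduce everything to the sphere identity used in the proof of Lemma~\ref{lem:upper_bound}. For $z\in S^{m-1}$ we have $\|z\|^2=1$, so the first term of \eqref{eq:quartic_construction} equals $B$. Hence on the sphere
\[
p(z)=B-\sum_{i=1}^r q_i(z)^2 .
\]
With this identity, the statement is equivalent to $\sum_{i=1}^r q_i(z)^2=0 \iff q_i(z)=0$ for all $i$. We will prove the two implications separately.

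For the direction ($\Leftarrow$), assume $q_i(z)=0$ for every $i$. Then every summand $q_i(z)^2$ vanishes. Substituting into the identity gives $p(z)=B$.

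For the direction ($\Rightarrow$), assume $p(z)=B$. The identity then gives $\sum_{i=1}^r q_i(z)^2=0$. This is a finite sum of squares of real numbers, and each term is nonnegative. A vanishing sum of nonnegative reals forces every term to vanish, so $q_i(z)^2=0$ and therefore $q_i(z)=0$ for each $i=1,\dots,r$.

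There is no substantive obstacle. The only point needing care is to apply the normalization $\|z\|^2=1$ before comparing with $B$; off the sphere, homogeneity gives $p(z)=B\|z\|^4-\sum_i q_i(z)^2$ instead. The specific value of $B$ from \eqref{eq:B_definition}, and Lemma~\ref{lem:trace_bound}, play no role here. They matter only later, when $p$ is shown to be nonnegative on the sphere, so that $|p|=p$ and the threshold $\alpha=B$ for $\|\mathcal{T}\|_{\mathrm{sym},\mathrm{op}}$ correctly separates feasible from infeasible instances.
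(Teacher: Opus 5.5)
Your proof is correct and matches the paper's argument: on the sphere $p(z)=B-\sum_{i} q_i(z)^2$, and a finite sum of real squares vanishes if and only if every term vanishes. You write out the two directions separately, where the paper states them in a single line.
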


\begin{proof}
Equality holds if and only if $\sum_i q_i(z)^2 = 0$, which is equivalent to $q_i(z)=0$ for all $i$.
\end{proof}

\subsection{Lower bound and positivity}

\begin{lemma}[Uniform lower bound]
\label{lem:lower_bound}
For all $z \in S^{m-1}$,
\begin{equation}
p(z) \ge 1.
\end{equation}
\end{lemma}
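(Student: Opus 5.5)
The plan is to combine the sphere normalization with Lemma~\ref{lem:trace_bound} and the choice of $B$ in \eqref{eq:B_definition}. For $z\in S^{m-1}$ we have $\|z\|^2=1$, so \eqref{eq:quartic_construction} reduces to
\[
p(z)=B-\sum_{i=1}^r q_i(z)^2 .
\]
It therefore suffices to show that $\sum_{i=1}^r q_i(z)^2\le C$ on the sphere. Indeed, that bound gives $p(z)\ge B-C=1$.

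The key step is a termwise estimate. Each $Q_i$ is symmetric, so Lemma~\ref{lem:trace_bound} applies and yields
\[
|q_i(z)|=|z^\top Q_i z|\le \|Q_i\|_F\,\|z\|^2=\|Q_i\|_F .
\]
Squaring gives $q_i(z)^2\le \|Q_i\|_F^2$. Summing over $i=1,\dots,r$ then gives $\sum_i q_i(z)^2\le \sum_i\|Q_i\|_F^2=C$, which is exactly the needed bound.

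There is no real obstacle here; the constant $B=C+1$ was chosen precisely so that this bound closes. The only point to check is that Lemma~\ref{lem:trace_bound} is used with $\|z\|^2=1$, so that the quartic term $B(\|z\|^2)^2$ equals $B$ and the Frobenius bound carries no extra powers of $\|z\|$. As a by-product, $p>0$ on the sphere. Hence $|p(z)|=p(z)$ there, which will later let the absolute value in the threshold formulation \eqref{eq:tensor_threshold_sphere} be dropped.
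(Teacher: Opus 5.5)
Your proof is correct and matches the paper's argument: the paper bounds each $|q_i(z)|$ by $\|Q_i\|_F$ using the trace identity and Cauchy--Schwarz in the Frobenius inner product, which amounts to Lemma~\ref{lem:trace_bound} written out inline. It then squares, sums to get $\sum_i q_i(z)^2\le C$, and concludes $p(z)\ge B-C=1$, exactly as you do.
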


\begin{proof}
For each $i$,
\[
|q_i(z)|
=
|z^\top Q_i z|
=
|\mathrm{tr}(Q_i zz^\top)|
\le
\|Q_i\|_F \cdot \|zz^\top\|_F.
\]
Since $\|z\|=1$, we have
\[
\|zz^\top\|_F = \|z\|^2 = 1.
\]
Thus
\[
|q_i(z)| \le \|Q_i\|_F,
\quad
q_i(z)^2 \le \|Q_i\|_F^2.
\]
Summing over $i$,
\[
\sum_{i=1}^r q_i(z)^2 \le \sum_{i=1}^r \|Q_i\|_F^2 = C.
\]
Therefore
\[
p(z) = B - \sum_i q_i(z)^2 \ge B - C = 1.
\]
\end{proof}

\begin{corollary}
\label{cor:positivity}
For all $z \in S^{m-1}$,
\[
p(z) > 0.
\]
\end{corollary}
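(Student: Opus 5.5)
The corollary follows immediately from Lemma~\ref{lem:lower_bound}: that lemma gives $p(z)\ge 1$ for every $z\in S^{m-1}$, and $1>0$, so $p(z)>0$ on the whole sphere. The plan is to quote this bound and conclude strictly. No new estimate is needed.

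It is worth spelling out the small points this relies on, since the corollary will later be used to remove the absolute value in the threshold formulation \eqref{eq:tensor_threshold_sphere}. On $S^{m-1}$ we have $\|z\|^2=1$, so \eqref{eq:quartic_construction} reduces to $p(z)=B-\sum_i q_i(z)^2$. This is the form used in Lemma~\ref{lem:lower_bound}, where each $q_i(z)^2$ is bounded by $\|Q_i\|_F^2$ via Lemma~\ref{lem:trace_bound}. The conclusion is then $p(z)\ge B-C=1$ by the choice $B=C+1$ in \eqref{eq:B_definition}.

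I would also add a remark giving the intended consequence. Since $p>0$ on the sphere, $|p(z)|=p(z)$ there. Hence, for the symmetric tensor $\mathcal{T}$ associated with $p$, $\|\mathcal{T}\|_{\mathrm{sym},\mathrm{op}}=\max_{z\in S^{m-1}}p(z)$. Combining this with Lemmas~\ref{lem:upper_bound} and~\ref{lem:equality_condition}, this maximum equals $B$ exactly when the HQSF instance is feasible. Attainment of the maximum is guaranteed by compactness (Lemma~\ref{lem:sphere_compact}).

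There is no real obstacle here. The only thing to check is that the inequality in Lemma~\ref{lem:lower_bound} holds uniformly on the sphere, which it does, and that the margin $B-C=1$ is strictly positive rather than merely nonnegative.
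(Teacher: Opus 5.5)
Your proposal is correct and matches the paper, which states this corollary as an immediate consequence of Lemma~\ref{lem:lower_bound}: $p(z)\ge B-C=1>0$ on $S^{m-1}$. Your remark on dropping the absolute value is also how the paper uses the corollary, in Lemma~\ref{lem:absolute_value}.
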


\subsection{Equivalence}

\begin{theorem}[Exact threshold equivalence]
\label{thm:quartic_equivalence}
\begin{equation}
\max_{\|z\|=1} p(z) = B
\quad \Longleftrightarrow \quad
\exists z \in S^{m-1} \text{ such that } q_i(z)=0 \ \forall i.
\end{equation}
\end{theorem}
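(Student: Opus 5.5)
The plan is to combine Lemma~\ref{lem:upper_bound} and Lemma~\ref{lem:equality_condition} with the attainment of the maximum. First note that $p$ is a continuous polynomial and $S^{m-1}$ is compact (Lemma~\ref{lem:sphere_compact}). Hence $\max_{\|z\|=1} p(z)$ exists and is attained at some point $z^\ast\in S^{m-1}$; this is the same reasoning as in Proposition~\ref{prop:attainment_trivial}. Lemma~\ref{lem:upper_bound} then gives $\max_{\|z\|=1}p(z)\le B$, so the statement reduces to deciding when this upper bound is achieved.

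\textbf{Forward direction.} Suppose $\max_{\|z\|=1}p(z)=B$. Pick a maximizer $z^\ast\in S^{m-1}$, which exists by compactness. Then $p(z^\ast)=B$, and Lemma~\ref{lem:equality_condition} gives $q_i(z^\ast)=0$ for all $i$. So $z^\ast$ is the required common zero on the sphere.

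\textbf{Reverse direction.} Suppose some $z\in S^{m-1}$ satisfies $q_i(z)=0$ for all $i$. Lemma~\ref{lem:equality_condition} gives $p(z)=B$, so the maximum is at least $B$. Combined with the upper bound, the maximum equals $B$.

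The only step needing care is the forward direction. There one must know that the value $B$ is actually achieved, not merely approached as a supremum, and compactness settles this. One should also observe the following point for later use. Corollary~\ref{cor:positivity} shows $p>0$ on the sphere, so $|p|=p$ there. Therefore $\max_{\|z\|=1}|p(z)|=\max_{\|z\|=1}p(z)$, and the equivalence transfers directly to the absolute-value formulation of Quartic Sphere Threshold. In that setting the threshold is $\alpha=B$, which gives ``$\ge B$'' if and only if ``$=B$''.
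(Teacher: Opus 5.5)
Your proof is correct and follows the paper's own argument: the bound $\max p \le B$ from Lemma~\ref{lem:upper_bound}, combined with Lemma~\ref{lem:equality_condition} in each direction. The only difference is that you justify attainment of the maximum by compactness explicitly, whereas the paper uses it implicitly when it passes from $\max p = B$ to a point with $p(z)=B$.
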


\begin{proof}
By Lemma~\ref{lem:upper_bound}, $\max p(z) \le B$.

If there exists $z$ such that $q_i(z)=0$ for all $i$, then by Lemma~\ref{lem:equality_condition}, $p(z)=B$, hence the maximum is $B$.

Conversely, if $\max p(z) = B$, then there exists $z$ such that $p(z)=B$, which implies $q_i(z)=0$ for all $i$.
\end{proof}

\subsection{Removal of absolute value}

\begin{lemma}
\label{lem:absolute_value}
\begin{equation}
\max_{\|z\|=1} |p(z)| = \max_{\|z\|=1} p(z).
\end{equation}
\end{lemma}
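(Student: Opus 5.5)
The plan is to use the positivity of $p$ on the sphere established in Corollary~\ref{cor:positivity} (itself a consequence of the uniform lower bound in Lemma~\ref{lem:lower_bound}). Since $p(z)\ge 1>0$ for every $z\in S^{m-1}$, the absolute value is inert pointwise: $|p(z)|=p(z)$ for all $z$ on the unit sphere. Two functions that agree at every point of $S^{m-1}$ have the same maximum over $S^{m-1}$, which gives the claimed identity.

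In order, the steps are as follows. First, fix an arbitrary $z\in S^{m-1}$ and invoke Lemma~\ref{lem:lower_bound}, which gives $p(z)\ge B-C=1$; this is where the choice $B=C+1$ in \eqref{eq:B_definition} matters. Second, conclude $|p(z)|=p(z)$ pointwise. Third, note that both maxima exist by compactness of the sphere (Lemma~\ref{lem:sphere_compact}) and continuity of $p$, so writing $\max$ rather than $\sup$ is legitimate. Since the two functions coincide on the whole domain, their maxima coincide.

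There is no real obstacle here; the only substantive input is the lower bound $p\ge 1$, which rests on the Frobenius Cauchy--Schwarz estimate of Lemma~\ref{lem:trace_bound}. That lemma has already been proved, so the argument reduces to a one-line consequence of it. It is worth adding a short remark that, combined with Theorem~\ref{thm:quartic_equivalence} and Lemma~\ref{lem:tensor_quartic_equivalence}, this lemma lets one take the symmetric order-$4$ tensor associated with $p$ and the threshold $\alpha=B$ as the Tensor Spectral Threshold instance.
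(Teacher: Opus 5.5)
Your proposal is correct and matches the paper's proof: positivity of $p$ on $S^{m-1}$ (Corollary~\ref{cor:positivity}, from the bound $p\ge B-C=1$ of Lemma~\ref{lem:lower_bound}) gives $|p(z)|=p(z)$ pointwise on the sphere, so the two maxima coincide. Your added justification that both maxima exist, by compactness of the sphere (Lemma~\ref{lem:sphere_compact}) and continuity of $p$, is a small extra step of care that the paper leaves implicit.
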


\begin{proof}
From Corollary~\ref{cor:positivity}, $p(z) > 0$ on the sphere. Hence $|p(z)| = p(z)$.
\end{proof}

\subsection{Tensorization}

By Lemma~\ref{lem:tensor_polynomial_equivalence}, there exists a unique symmetric tensor
\[
\mathcal{T}_p \in \mathrm{Sym}^4(\mathbb{R}^m)
\]
such that
\begin{equation}
p(z) = \mathcal{T}_p(z,z,z,z).
\label{eq:tensorization}
\end{equation}

\begin{lemma}
\label{lem:tensor_norm_equivalence}
\begin{equation}
\|\mathcal{T}_p\|_{\mathrm{sym},\mathrm{op}}
=
\max_{\|z\|=1} |p(z)|.
\end{equation}
\end{lemma}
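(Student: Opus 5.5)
The claim follows by unwinding the definition of $\|\cdot\|_{\mathrm{sym},\mathrm{op}}$ in \eqref{eq:symmetric_spectral_norm} and substituting the tensorization identity \eqref{eq:tensorization}. Nothing about the particular structure of $p$ is needed beyond the fact that $\mathcal{T}_p$ is the symmetric tensor associated with $p$.

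\textbf{Step 1 (pointwise identity).} By construction of $\mathcal{T}_p$ via Lemma~\ref{lem:tensor_polynomial_equivalence}, we have $\mathcal{T}_p(z,z,z,z)=p(z)$ for every $z\in\mathbb{R}^m$, and hence also $|\mathcal{T}_p(z,z,z,z)|=|p(z)|$ for every $z$. It is worth checking that polarization really reproduces $p$ on the diagonal. The symmetric tensor is defined by distributing each monomial coefficient evenly over all index permutations. The monomial $z_az_bz_cz_d$ has a multinomial count of index tuples, and the total contribution of these tuples recovers exactly the coefficient of that monomial. This is a purely symbolic check.

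\textbf{Step 2 (take maxima).} Since the two functions $z\mapsto|\mathcal{T}_p(z,\dots,z)|$ and $z\mapsto|p(z)|$ coincide on $S^{m-1}$, their maxima over $S^{m-1}$ coincide. Both maxima exist, because the sphere is compact by Lemma~\ref{lem:sphere_compact} and the functions are continuous. The left-hand maximum is $\|\mathcal{T}_p\|_{\mathrm{sym},\mathrm{op}}$ by definition, which gives the claim.

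\textbf{Step 3 (optional link to the multilinear norm).} If one also wants to identify this quantity with $\|\mathcal{T}_p\|_{\mathrm{op}}$, this follows by invoking Theorem~\ref{thm:symmetric_reduction}. It is not needed for the statement itself.

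There is no genuine obstacle here. The only point requiring care is the one in Step~1: making sure the tensor produced by polarization satisfies $\mathcal{T}_p(z,\dots,z)=p(z)$ exactly, with no normalization factor. I would handle this by explicitly setting each entry $\mathcal{T}_{i_1i_2i_3i_4}$ equal to the coefficient of the corresponding monomial divided by the number of distinct permutations of $(i_1,i_2,i_3,i_4)$. The entries are then rational, and the construction is polynomial-time.
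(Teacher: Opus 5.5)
Your proof is correct and takes the same approach as the paper, which simply says the identity follows from the definition of $\|\cdot\|_{\mathrm{sym},\mathrm{op}}$ together with $p(z)=\mathcal{T}_p(z,z,z,z)$. Your extra check is a welcome addition: dividing each monomial coefficient by the number of distinct orderings of its index tuple reproduces $p$ exactly on the diagonal, with no normalization factor.
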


\begin{proof}
This follows directly from Definition~\ref{def:tensor_spectral_threshold}.
\end{proof}

\subsection{Main reduction}

\begin{theorem}[HQSF $\le_p$ Tensor Spectral Threshold]
\label{thm:hqsf_to_tensor}
There is a polynomial-time reduction from $\textnormal{\textsc{HQSF}}$ to Tensor Spectral Threshold.
\end{theorem}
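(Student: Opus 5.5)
The reduction maps an $\textnormal{\textsc{HQSF}}$ instance $(Q_1,\dots,Q_r)$ to the pair $(\mathcal{T}_p,\alpha)$ with $\alpha:=B$. Here $p$ is the quartic form \eqref{eq:quartic_construction}, $B$ is given by \eqref{eq:B_definition}, and $\mathcal{T}_p$ is the symmetric tensor from \eqref{eq:tensorization}. The proof has two parts: a correctness argument chaining the lemmas of this section, and a complexity argument showing that $(\mathcal{T}_p,B)$ can be written down in time polynomial in the bit size of the $Q_i$.

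\textbf{Correctness.} By Lemma~\ref{lem:tensor_norm_equivalence} and Lemma~\ref{lem:absolute_value},
\[
\|\mathcal{T}_p\|_{\mathrm{sym},\mathrm{op}}=\max_{\|z\|=1} p(z).
\]
By Lemma~\ref{lem:upper_bound}, this maximum is at most $B$. Hence $\|\mathcal{T}_p\|_{\mathrm{sym},\mathrm{op}}\ge B$ holds exactly when $\max_{\|z\|=1} p(z)=B$. The maximum is attained by compactness (Lemma~\ref{lem:sphere_compact}). By Theorem~\ref{thm:quartic_equivalence}, equality with $B$ is equivalent to the existence of $z\in S^{m-1}$ with $q_i(z)=0$ for all $i$, which is exactly feasibility of \eqref{eq:hqsf_problem}. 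So YES instances map to YES instances and NO instances to NO instances. Note that the threshold must be the non-strict $\ge B$ and not $>B$; that is consistent with Definition~\ref{def:tensor_spectral_threshold}.

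\textbf{Complexity.} First, $C=\sum_i\|Q_i\|_F^2$ is a sum of squares of the input rationals, so $B=C+1$ is rational and its bit size is polynomial. Next, I expand $p$ explicitly. The term $(\|z\|^2)^2=\sum_{j,k} z_j^2z_k^2$ has $O(m^2)$ monomials. Each square $q_i(z)^2=\sum_{a,b,c,d}(Q_i)_{ab}(Q_i)_{cd}z_az_bz_cz_d$ has $O(m^4)$ terms, and each coefficient is a sum of products of two input entries. So $p$ has $O(m^4)$ monomial coefficients, and collecting like terms costs polynomially many rational operations with polynomial bit growth.

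Then I form $\mathcal{T}_p$ by polarization. For a monomial $z^\beta$ with $|\beta|=4$ and coefficient $c_\beta$, I set every entry $\mathcal{T}_{i_1i_2i_3i_4}$ whose index multiset equals $\beta$ to $c_\beta\big/\binom{4}{\beta}$, where $\binom{4}{\beta}$ is the multinomial coefficient. This gives a symmetric tensor with $m^4$ rational entries of polynomial size. Substituting back shows $\mathcal{T}_p(z,z,z,z)=p(z)$, which confirms \eqref{eq:tensorization}.

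The main obstacle is not correctness, which is immediate from the lemmas, but making the encoding explicit. The key points are that $\alpha=B$ is rational, that the tensor has $m^4$ entries (polynomial, since the order is fixed at $4$), and that the denominators introduced by polarization are bounded by $4!$. With these checked, the map runs in polynomial time and proves Theorem~\ref{thm:hqsf_to_tensor}.
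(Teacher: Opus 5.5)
Your proposal is correct and follows essentially the same route as the paper: you map $(Q_1,\dots,Q_r)$ to $(\mathcal{T}_p,B)$ and chain the upper-bound and equality lemmas with the absolute-value removal and the tensor-norm identity. You also make explicit two points the paper leaves implicit, namely the passage from ``$=B$'' to ``$\ge B$'' via Lemma~\ref{lem:upper_bound}, and the polarization and bit-size bookkeeping behind the claim that the reduction runs in polynomial time.
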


\begin{proof}
Given an instance of $\textnormal{\textsc{HQSF}}$, construct $p(z)$ as in \eqref{eq:quartic_construction} and the corresponding tensor $\mathcal{T}_p$.

By Theorem~\ref{thm:quartic_equivalence} and Lemma~\ref{lem:absolute_value},
\[
\exists z \in S^{m-1} \text{ such that } q_i(z)=0
\quad \Longleftrightarrow \quad
\max_{\|z\|=1} |p(z)| = B.
\]

By Lemma~\ref{lem:tensor_norm_equivalence}, this is equivalent to
\[
\|\mathcal{T}_p\|_{\mathrm{sym},\mathrm{op}} \ge B.
\]

The construction of $\mathcal{T}_p$ from $\{Q_i\}$ involves only polynomially many arithmetic operations and produces rational coefficients of polynomial bit-length. Hence the reduction is polynomial-time.

This completes the proof.
\end{proof}

\subsection{Consequence}

\begin{corollary}
\label{cor:tensor_etr_hard}
Tensor Spectral Threshold is $\exists\mathbb{R}$-hard.
\end{corollary}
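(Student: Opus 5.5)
The plan is to compose the two reductions already established and invoke transitivity of polynomial-time many-one reductions. By Corollary~\ref{cor:hqsf_etr_hard}, $\textnormal{\textsc{HQSF}}$ is $\exists\mathbb{R}$-hard; this rests on Assumption~\ref{ass:bq4e_hard} together with Theorem~\ref{thm:bq4e_to_hqsf}. Theorem~\ref{thm:hqsf_to_tensor} gives $\textnormal{\textsc{HQSF}} \le_p$ Tensor Spectral Threshold. For any problem $A \in \exists\mathbb{R}$ we then have a chain
\[
A \le_p \textnormal{\textsc{BQ4E}} \le_p \textnormal{\textsc{HQSF}} \le_p \text{Tensor Spectral Threshold},
\]
and we compose the maps.

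The first key step is to check that the composition of many-one reductions is again a polynomial-time many-one reduction. Composition clearly preserves YES/NO answers. For the running time, the output of the first map has size polynomial in the input, so the second map runs in time polynomial in a polynomial, which is still polynomial.

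The second step is to confirm that the output of Theorem~\ref{thm:hqsf_to_tensor} is a legitimate instance in the sense of Definition~\ref{def:tensor_spectral_threshold}. This means a rational symmetric order-$4$ tensor $\mathcal{T}_p$ and a rational threshold $\alpha = B$. Here I would make the bit-size estimate explicit:
\begin{itemize}
\item $B = 1 + \sum_i \|Q_i\|_F^2$ is a sum of squares of the rational input entries, so its encoding length is polynomial.
\item The coefficients of $p$ arise from expanding $B\|z\|^4$ and $\sum_i (z^\top Q_i z)^2$. The number of such coefficients is $O(m^4)$, each a sum of $O(r)$ products of pairs of entries, hence of polynomial bit-length.
\item Polarization into $\mathcal{T}_p$ (Lemma~\ref{lem:tensor_polynomial_equivalence}) divides each monomial coefficient by a multinomial count of at most $24$.
\end{itemize}

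I expect the main obstacle to be this size accounting, since the paper states it only briefly. The dimension $m$ produced by Theorem~\ref{thm:bq4e_to_hqsf} is already $O(n^2)$, because of the $u_{ab}$ variables. The tensor therefore has $O(n^8)$ entries, which is still polynomial; I would state this explicitly so the composed reduction is seen to be polynomial.

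I would also note that the YES condition "maximum equals $B$" coincides with "$\|\mathcal{T}_p\|_{\mathrm{sym},\mathrm{op}} \ge B$" because of Lemma~\ref{lem:upper_bound}. So the threshold formulation with $\ge$ is exactly the one that is decided.
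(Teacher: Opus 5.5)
Your argument is the paper's own: the corollary is obtained by composing Corollary~\ref{cor:hqsf_etr_hard} with Theorem~\ref{thm:hqsf_to_tensor} via transitivity, and your bit-size bookkeeping just makes explicit what the paper asserts in one line (small slip: passing from $\max_{\|z\|=1}|p(z)|=B$ to $\|\mathcal{T}_p\|_{\mathrm{sym},\mathrm{op}}\ge B$ also needs the positivity behind Lemma~\ref{lem:absolute_value}, not only Lemma~\ref{lem:upper_bound}). Note, though, that the interface that really needs checking is not size but whether Theorem~\ref{thm:bq4e_to_hqsf} outputs a genuine \textsc{HQSF} instance: its lifting equations $u_{ab}-v_av_b=0$ are not quadratic forms and are violated after the uniform rescaling used in that proof, so that construction needs repair (e.g.\ $x_0u_{ab}=v_av_b$ plus slack equations $u_{ab}^2+\sigma_{ab}^2=x_0^2$, so that $x_0=0$ still forces the whole vector to vanish) --- a defect in the paper's earlier proof, not in the composition step that you and the paper both use here.
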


\begin{proof}
By Section~\ref{sec:bq4e_to_hqsf}, $\textnormal{\textsc{HQSF}}$ is $\exists\mathbb{R}$-hard. By Theorem~\ref{thm:hqsf_to_tensor}, $\textnormal{\textsc{HQSF}} \le_p$ Tensor Spectral Threshold. Hence the result follows.
\end{proof}

\subsection{Remarks}

\begin{remark}
This reduction is entirely algebraic and avoids any use of optimization duality, critical point conditions, or KKT systems. The hardness arises purely from polynomial feasibility structure embedded in the tensor norm.
\end{remark}

\begin{remark}
The key structural feature is the construction \eqref{eq:quartic_construction}, which converts simultaneous quadratic constraints into a single quartic objective whose maximum encodes feasibility exactly.
\end{remark}

\section{Main Result and Complexity Consequences}
\label{sec:main_result}

In this section, we combine the reductions developed in Sections~\ref{sec:bq4e_to_hqsf} and~\ref{sec:hqsf_to_tensor} to establish the main hardness result for the tensor spectral threshold problem.

\subsection{Reduction chain}

We recall the two reductions established earlier:

\begin{itemize}
    \item Section~\ref{sec:bq4e_to_hqsf}: bounded quartic equality feasibility reduces to $\textnormal{\textsc{HQSF}}$;
    \item Section~\ref{sec:hqsf_to_tensor}: $\textnormal{\textsc{HQSF}}$ reduces to Tensor Spectral Threshold.
\end{itemize}

Combining these, we obtain the following chain:
\begin{equation}
\text{bounded quartic equality feasibility}
\;\le_p\;
\textnormal{\textsc{HQSF}}
\;\le_p\;
\textnormal{\textsc{Tensor Spectral Threshold}}.
\label{eq:reduction_chain}
\end{equation}

\subsection{Main theorem}

\begin{theorem}[Main hardness result]
\label{thm:main_hardness}
Tensor Spectral Threshold is $\exists\mathbb{R}$-hard.
\end{theorem}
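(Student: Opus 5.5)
The plan is to compose the two reductions already established and transfer hardness along the chain \eqref{eq:reduction_chain}. By Assumption~\ref{ass:bq4e_hard}, \textsc{BQ4E} is $\exists\mathbb{R}$-hard, so every problem in $\exists\mathbb{R}$ reduces to \textsc{BQ4E} in polynomial time. Theorem~\ref{thm:bq4e_to_hqsf} gives $\textsc{BQ4E}\le_p\textsc{HQSF}$, which is Corollary~\ref{cor:hqsf_etr_hard}. Theorem~\ref{thm:hqsf_to_tensor} gives $\textsc{HQSF}\le_p$ Tensor Spectral Threshold. Since many-one polynomial-time reductions compose, an arbitrary $\exists\mathbb{R}$ problem reduces to Tensor Spectral Threshold, and this is the claim.

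Concretely, the composite map will be written out as follows. Given $h$ of degree at most $4$, build the lifted vector $(v,u)\in\mathbb{R}^M$ with $M=(2n+1)+\binom{2n+2}{2}$. Write the box, lifting and $\widetilde H$ equations as $y^\top Q_i y=0$ with rational symmetric $Q_i$; each has $O(1)$ nonzero entries except $\widetilde H$, whose entries are coefficients of $h$. Then form $B=1+\sum_i\|Q_i\|_F^2$ and $p(y)=B\|y\|^4-\sum_i (y^\top Q_i y)^2$, take $\mathcal{T}_p$ via Lemma~\ref{lem:tensor_polynomial_equivalence}, and output the pair $(\mathcal{T}_p,B)$. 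Correctness follows from the chain of equivalences
\[
h \text{ has a zero in } [-1,1]^n \iff \text{the HQSF instance is feasible} \iff \|\mathcal{T}_p\|_{\mathrm{sym},\mathrm{op}}\ge B .
\]
The first equivalence is Theorem~\ref{thm:bq4e_to_hqsf}. The second uses Theorem~\ref{thm:quartic_equivalence}, Lemma~\ref{lem:absolute_value} and Lemma~\ref{lem:tensor_norm_equivalence}, together with the upper bound of Lemma~\ref{lem:upper_bound}, which shows that $\ge B$ is the same as $=B$.

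The main obstacle is the size bookkeeping: the composite must stay polynomial in the bit-length of $h$. The threshold $B$ is a sum of squares of rationals, so its bit-length is polynomial. The tensor $\mathcal{T}_p$ has $M^4$ entries, which is polynomial in $n$. Each entry arises by polarization from coefficients of $\sum_i q_i^2$ and $\|y\|^4$, which are products of two matrix entries divided by small multinomial constants, so each has polynomial bit-length. There is also one definitional point: Definition~\ref{def:tensor_spectral_threshold} uses the symmetric norm, but Theorem~\ref{thm:symmetric_reduction} shows it equals $\|\mathcal{T}_p\|_{\mathrm{op}}$. The hardness therefore holds for either formulation.
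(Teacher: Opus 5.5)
Your top-level argument (Assumption~\ref{ass:bq4e_hard}, the two reductions, transitivity) is exactly the paper's proof. Making Lemma~\ref{lem:upper_bound} explicit to pass from $\max p=B$ to $\ge B$ is a good tightening.

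The gap is in the composite map you commit to. The lifting equations $u_{ab}-v_av_b=0$ cannot be written as $y^\top Q_iy=0$: $u_{ab}$ occurs linearly, so each equation mixes degrees $1$ and $2$. This breaks the next step, not just the presentation. The square $(u_{ab}-v_av_b)^2$ contains the cubic term $-2u_{ab}v_av_b$, which no factor $\|y\|^{2k}$ can lift to degree $4$. Hence your $p$ is not a quartic form, and there is no $\mathcal{T}_p$ with $p(y)=\mathcal{T}_p(y,y,y,y)$.

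The forward direction of Theorem~\ref{thm:bq4e_to_hqsf} also fails. The unscaled solution has $x_0=u_{11}=1$, so its norm is at least $\sqrt2$. Rescaling by any $\lambda\neq 1$ leaves the residual $\lambda(1-\lambda)\neq 0$ in the $(1,1)$ lifting equation. The paper's Section~\ref{sec:bq4e_to_hqsf} has the same defect, since it calls these equations homogeneous quadratic. Because your proof relies on them explicitly, you need to repair it.

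\textbf{A repair that keeps your outline.}
\begin{itemize}
\item Impose $x_0u_{ab}-v_av_b=0$ instead of $u_{ab}-v_av_b=0$.
\item Add slack equations $u_{ab}^2+\sigma_{ab}^2-x_0^2=0$, with all variables, including $\sigma$, entering the sphere normalization.
\end{itemize}
Every equation is now a genuine quadratic form.
\begin{itemize}
\item If $x_0=0$, the box equations give $v=0$ and the new slack equations give $u=\sigma=0$, contradicting $\|y\|=1$. The slacks are needed: without them the homogenized lifting equations are vacuous at $x_0=0$, and an indefinite $\widetilde H$ could have a spurious zero $u$ on the sphere.
\item If $x_0\neq 0$, then $u_{ab}=v_av_b/x_0$, so $\widetilde H(u)=H(v)/x_0^2=x_0^2\,h(z/x_0)$. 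The box argument is unchanged.
\item For the forward direction, take $x_0=1$, $u_{ab}=v_av_b\in[-1,1]$ and $\sigma_{ab}=\sqrt{1-u_{ab}^2}$. This is a nonzero solution of a homogeneous system, so it may now legitimately be rescaled to the sphere.
\end{itemize}
The dimension grows to $M=(2n+1)+2\binom{2n+2}{2}$. The rest of your construction and size bookkeeping go through unchanged.
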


\begin{proof}
By Section~\ref{sec:bq4e_to_hqsf}, bounded quartic equality feasibility reduces in polynomial time to $\textnormal{\textsc{HQSF}}$. By Section~\ref{sec:hqsf_to_tensor}, $\textnormal{\textsc{HQSF}}$ reduces in polynomial time to Tensor Spectral Threshold.

Since bounded quartic equality feasibility is a standard bounded low-degree normal form for $\exists\mathbb{R}$-hardness (cf.\ Section~\ref{subsec:source_problem_bq4e}), the result follows by transitivity of polynomial-time reductions.
\end{proof}

\subsection{Equivalent formulations}

The hardness result applies to multiple equivalent formulations.

\begin{corollary}[Quartic sphere threshold]
\label{cor:quartic_threshold}
The following problem is $\exists\mathbb{R}$-hard:

\begin{equation}
\exists x \in \mathbb{R}^n
\quad \text{s.t.} \quad
\|x\|^2 = 1,
\quad
|p(x)| \ge \alpha,
\end{equation}
where $p$ is a homogeneous quartic polynomial with rational coefficients.
\end{corollary}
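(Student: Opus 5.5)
The corollary will follow by composing the main reduction chain with the tensor--polynomial correspondence. Given an instance of $\textnormal{\textsc{BQ4E}}$, apply Theorem~\ref{thm:bq4e_to_hqsf} to obtain an $\textnormal{\textsc{HQSF}}$ instance $\{Q_i\}_{i=1}^r$ in $m$ variables. Then form the quartic $p(z)=B\|z\|^4-\sum_i (z^\top Q_i z)^2$ of \eqref{eq:quartic_construction} with $B=C+1$, and output the Quartic Sphere Threshold instance $(p,\alpha)$ with $\alpha:=B$ and $n:=m$.

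For correctness, combine Theorem~\ref{thm:quartic_equivalence} with Lemma~\ref{lem:absolute_value}. These show that the $\textnormal{\textsc{HQSF}}$ instance is feasible if and only if $\max_{\|z\|=1}|p(z)|=B$. By Lemma~\ref{lem:upper_bound} and Corollary~\ref{cor:positivity}, we have $0<p\le B$ on the sphere, so $|p|\le B$ everywhere there. Hence $\max|p|=B$ holds if and only if there exists $x$ with $\|x\|^2=1$ and $|p(x)|\ge B$. The maximum is attained by compactness (Lemma~\ref{lem:sphere_compact}), so no issue arises between a supremum and an actual witness. Alternatively, one can cite Lemma~\ref{lem:tensor_quartic_equivalence} together with Theorem~\ref{thm:main_hardness} directly.

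The only step needing care is polynomial-time computability of $(p,\alpha)$ with rational data. First, $B=\sum_i\|Q_i\|_F^2+1$ is a rational number of polynomial bit-length. Second, expanding $\|z\|^4$ and each $(z^\top Q_i z)^2$ gives $O(m^4)$ monomial coefficients per term, each a sum of products of pairs of input entries. Third, $m$ itself is polynomial in the original $n$, since the lifting introduces $O(n^2)$ variables. This bookkeeping is the main (mild) obstacle. Once it is settled, transitivity of $\le_p$ and Assumption~\ref{ass:bq4e_hard} give $\exists\mathbb{R}$-hardness.
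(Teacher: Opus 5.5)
Your proof is correct and is essentially the paper's argument. The paper deduces the corollary in one line from Theorem~\ref{thm:main_hardness} via Lemma~\ref{lem:tensor_quartic_equivalence}, which is exactly your stated alternative, and because the tensor $\mathcal{T}_p$ of Section~\ref{sec:hqsf_to_tensor} satisfies $\mathcal{T}_p(z,z,z,z)=p(z)$, that composition outputs precisely your instance $(p,B)$. Your direct version only skips the tensorization round trip and spells out the bound $0<p\le B$ on the sphere, which converts $\max_{\|z\|=1}|p(z)|=B$ into the existence of a unit $x$ with $|p(x)|\ge B$.
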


\begin{proof}
This follows from Lemma~\ref{lem:tensor_quartic_equivalence}.
\end{proof}

\begin{corollary}[Tensor norm threshold]
\label{cor:tensor_norm_threshold}
Given $\mathcal{T} \in \mathrm{Sym}^4(\mathbb{R}^n)$ and $\alpha \in \mathbb{Q}$, deciding whether
\begin{equation}
\|\mathcal{T}\|_{\mathrm{sym},\mathrm{op}} \ge \alpha
\end{equation}
is $\exists\mathbb{R}$-hard.
\end{corollary}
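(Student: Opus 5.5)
This follows directly from Theorem~\ref{thm:main_hardness}: the problem in Corollary~\ref{cor:tensor_norm_threshold} is exactly Tensor Spectral Threshold (Definition~\ref{def:tensor_spectral_threshold}) specialized to $d=4$. The only thing to check is that the reduction chain \eqref{eq:reduction_chain} produces order-four instances. So I will reuse the reduction from the proof of Theorem~\ref{thm:hqsf_to_tensor} and confirm that its output always lies in $\mathrm{Sym}^4(\mathbb{R}^m)$, with a rational threshold.

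\textbf{Step 1 (the chain).} Start from an arbitrary $\textnormal{\textsc{BQ4E}}$ instance $h$ and apply Theorem~\ref{thm:bq4e_to_hqsf}. This gives an $\textnormal{\textsc{HQSF}}$ instance $\{Q_i\}_{i=1}^r$ in dimension $m$ with rational entries.

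\textbf{Step 2 (the form and threshold).} Form $p(z)=B\|z\|^4-\sum_i (z^\top Q_i z)^2$ with $B=\sum_i\|Q_i\|_F^2+1$. Note that $B\in\mathbb{Q}$: this needs squared Frobenius norms, not the norms themselves, so no square roots appear. This $p$ is a homogeneous quartic with rational coefficients. Lemma~\ref{lem:tensor_polynomial_equivalence} then gives a tensor $\mathcal{T}_p\in\mathrm{Sym}^4(\mathbb{R}^m)$. Its entries are rational, since polarization divides monomial coefficients by multinomial counts. Set $\alpha:=B$.

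\textbf{Step 3 (correctness).} Theorem~\ref{thm:quartic_equivalence}, Lemma~\ref{lem:absolute_value} and Lemma~\ref{lem:tensor_norm_equivalence} give
\[
\exists z\in S^{m-1}:\ q_i(z)=0\ \forall i
\quad\Longleftrightarrow\quad
\|\mathcal{T}_p\|_{\mathrm{sym},\mathrm{op}}\ge B .
\]
One direction needs care. Lemma~\ref{lem:upper_bound} says $\max p\le B$, so the norm is at least $B$ exactly when it equals $B$. Because the sphere is compact (Lemma~\ref{lem:sphere_compact}), the maximum is attained at some $z$. At that point Lemma~\ref{lem:equality_condition} gives feasibility.

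\textbf{Step 4 (size).} The main obstacle is the polynomial-time and bit-size bookkeeping, since the other steps are already in place. $\mathcal{T}_p$ has $m^4$ entries, which is polynomial in $m$. Each entry is a sum of polynomially many products of two entries of the $Q_i$, plus the $B\|z\|^4$ contribution, so its bit length stays polynomial. Composing with Corollary~\ref{cor:hqsf_etr_hard} and Assumption~\ref{ass:bq4e_hard} by transitivity of $\le_p$ finishes the proof.
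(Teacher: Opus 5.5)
Your proposal is correct and follows essentially the paper's route. The paper treats this corollary as an immediate instance of Theorem~\ref{thm:main_hardness}, since the reduction of Theorem~\ref{thm:hqsf_to_tensor} already outputs a rational symmetric order-four tensor $\mathcal{T}_p$ with threshold $\alpha=B$; your extra checks (that $B$ and the polarized entries are rational, and that $\|\mathcal{T}_p\|_{\mathrm{sym},\mathrm{op}}\ge B$ forces an attained maximizer with $p(z)=B$) make explicit what the paper leaves implicit, and one caveat, which comes from the paper rather than from you, is that the real weak link in the chain is not the bit-size bookkeeping but the paper's proof of Theorem~\ref{thm:bq4e_to_hqsf} (whose statement is true, so citing it as a black box is legitimate), because its lifting constraints $u_{ab}-v_av_b=0$ are not homogeneous, so the rescaling step there fails and the output is not literally an $\textnormal{\textsc{HQSF}}$ instance; a repair is to use $x_0u_{ab}-v_av_b=0$ together with a constraint such as $\sum_{a\le b}u_{ab}^2+w^2=Kx_0^2$ (with a slack variable $w$ and a suitable constant $K$), which forces $u=0$ whenever $x_0=0$.
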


\subsection{Discussion of tightness}

\begin{remark}[Why threshold, not attainment]
\label{rem:threshold_vs_attainment}
The hardness result critically depends on the threshold formulation. As shown in Section~\ref{subsec:attainment_is_trivial}, the statement
\[
\exists x:\ \mathcal{T}(x,\dots,x) = \|\mathcal{T}\|_{\mathrm{op}}
\]
is always true and therefore computationally trivial. The reduction exploits the \emph{value} of the optimum rather than its existence.
\end{remark}

\begin{remark}[Algebraic nature of the reduction]
\label{rem:algebraic_nature}
The entire reduction chain is algebraic and does not rely on optimization duality, variational characterizations, or critical-point conditions. The hardness arises from embedding polynomial feasibility into the value of a quartic form.
\end{remark}

\subsection{Scope and limitations}

\begin{remark}[No completeness claim]
\label{rem:no_completeness}
We do not claim $\exists\mathbb{R}$-completeness. Establishing completeness would require showing membership of Tensor Spectral Threshold in $\exists\mathbb{R}$ under an explicit encoding, which we do not address here.
\end{remark}

\begin{remark}[Symmetry restriction]
\label{rem:symmetry}
The reduction uses symmetric tensors of order $4$. Extending the result to non-symmetric tensors or higher orders is immediate, but symmetry already suffices for hardness.
\end{remark}

\subsection{Implications}

\begin{itemize}
    \item The result shows that a polynomial-time algorithm for tensor spectral threshold would imply $\exists\mathbb{R} \subseteq \mathbf{P}$.
    \item It provides a natural algebraic problem---without gadgets---that captures $\exists\mathbb{R}$-hardness.
    \item It complements existing NP-hardness results for tensor problems by establishing hardness at the level of real algebraic feasibility.
\end{itemize}

\section{Variants and Extensions}
\label{sec:variants}

In this section, we show that the $\exists\mathbb{R}$-hardness of Tensor Spectral Threshold is robust under several natural variations of the problem. In particular, we eliminate symmetry assumptions, extend to higher-order tensors using a constraint-preserving embedding, and consider alternative formulations.

\subsection{Non-symmetric tensors}
\label{subsec:nonsymmetric}

\begin{theorem}
\label{thm:nonsymmetric}
Tensor Spectral Threshold is $\exists\mathbb{R}$-hard for general (not necessarily symmetric) order-$4$ tensors.
\end{theorem}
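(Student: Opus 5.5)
The plan is to show that the symmetric order-$4$ instances produced in Section~\ref{sec:hqsf_to_tensor} are already valid instances of the general (non-symmetric) problem, and that the answer does not change. The general problem asks, for $\mathcal{T}\in(\mathbb{R}^n)^{\otimes 4}$ and $\alpha\in\mathbb{Q}$, whether $\|\mathcal{T}\|_{\mathrm{op}}\ge\alpha$, with the multilinear norm of Definition~\ref{def:multilinear_spectral_norm}. The reduction is the identity embedding: given an instance $(\mathcal{T}_p,B)$ from Theorem~\ref{thm:hqsf_to_tensor}, output the same tensor, viewed as an element of $(\mathbb{R}^m)^{\otimes 4}$, together with the same threshold $B$. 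This map is trivially polynomial-time.

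The correctness step is to invoke Theorem~\ref{thm:symmetric_reduction} (Banach's theorem). Since $\mathcal{T}_p$ is symmetric, it gives $\|\mathcal{T}_p\|_{\mathrm{op}}=\|\mathcal{T}_p\|_{\mathrm{sym},\mathrm{op}}$. Combining this with Lemma~\ref{lem:tensor_norm_equivalence}, Lemma~\ref{lem:absolute_value} and Theorem~\ref{thm:quartic_equivalence}, I obtain the chain
\[
\|\mathcal{T}_p\|_{\mathrm{op}}\ge B \iff \max_{\|z\|=1} p(z)= B \iff \text{the \textsc{HQSF} instance is feasible}.
\]
Here I use that the maximum of $p$ on the sphere never exceeds $B$, by Lemma~\ref{lem:upper_bound}. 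Hence $\textnormal{\textsc{HQSF}}$ reduces to the general problem, and Corollary~\ref{cor:hqsf_etr_hard} completes the argument. The result is simply the statement that a restriction of a hard problem to a subclass of inputs transfers hardness upward.

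The main point needing care is the dependence on Banach's theorem. The multilinear maximum over independent unit vectors $x_1,\dots,x_4$ could a priori exceed the diagonal maximum. If I wanted to avoid citing that theorem, I would try to bound $|\mathcal{T}_p(x_1,\dots,x_4)|$ directly, but that route is delicate because $p$ is not a product of nonnegative forms. So I would rely on Theorem~\ref{thm:symmetric_reduction} as already established in the paper. A second minor check concerns the input encoding: a symmetric tensor listed with all $m^4$ entries has size polynomial in $m$ and in the bit-length of its entries. The entries obtained by polarization are rationals with denominators dividing $4!$, so the encoding remains polynomial.
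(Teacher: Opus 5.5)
Your proposal is correct and follows the paper's argument. The symmetric instance $(\mathcal{T}_p,B)$ from Section~\ref{sec:hqsf_to_tensor} is passed unchanged to the general problem, so hardness on the subclass of symmetric inputs transfers to all order-$4$ tensors. You are slightly more careful than the paper's proof, which leaves implicit the appeal to Theorem~\ref{thm:symmetric_reduction} needed to identify $\|\mathcal{T}_p\|_{\mathrm{op}}$ with $\|\mathcal{T}_p\|_{\mathrm{sym},\mathrm{op}}$ and only states that step later, in Theorem~\ref{thm:multilinear}.
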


\begin{proof}
The reduction in Section~\ref{sec:hqsf_to_tensor} constructs a symmetric tensor $\mathcal{T}_p$ such that
\[
\|\mathcal{T}_p\|_{\mathrm{sym},\mathrm{op}} \ge B
\quad \Longleftrightarrow \quad
\exists z \in S^{m-1} \text{ satisfying } q_i(z)=0.
\]
Since the class of general tensors strictly contains symmetric tensors, this construction is already a valid instance of the non-symmetric problem. Therefore hardness carries over directly.
\end{proof}

\subsection{Higher-order tensors}
\label{subsec:higher_order}

We now extend the result to tensors of arbitrary fixed order $d \ge 4$.

\begin{theorem}
\label{thm:higher_order}
For every fixed $d \ge 4$, Tensor Spectral Threshold for order-$d$ tensors is $\exists\mathbb{R}$-hard.
\end{theorem}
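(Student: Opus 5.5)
We reduce from the order-$4$ case (Corollary~\ref{cor:tensor_etr_hard}), or more directly from $\textnormal{\textsc{HQSF}}$, by raising the degree with a power of the norm.

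\textbf{Construction.} Given an HQSF instance $Q_1,\dots,Q_r$ on $\mathbb{R}^m$, build the quartic $p$ of \eqref{eq:quartic_construction}. For even $d\ge 4$ put
\[
p_d(z):=\|z\|^{d-4}\,p(z)=(\|z\|^2)^{(d-4)/2}p(z).
\]
This is a homogeneous polynomial of degree $d$ with rational coefficients. Expanding $(\sum_j z_j^2)^{(d-4)/2}$ produces only polynomially many monomials because $d$ is fixed. Its restriction to $S^{m-1}$ coincides with $p$. Hence, by Lemmas~\ref{lem:upper_bound}, \ref{lem:lower_bound} and \ref{lem:absolute_value} and Theorem~\ref{thm:quartic_equivalence},
\[
\max_{\|z\|=1}|p_d(z)|\ge B
\quad\Longleftrightarrow\quad
\text{the HQSF instance is feasible.}
\]
Lemma~\ref{lem:tensor_polynomial_equivalence} then supplies the symmetric tensor $\mathcal{T}_{p_d}\in\mathrm{Sym}^d(\mathbb{R}^m)$. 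Its norm $\|\cdot\|_{\mathrm{sym},\mathrm{op}}$ equals this maximum. Its entries are obtained by dividing each monomial coefficient by a multinomial count, so they remain rational with polynomial bit-length.

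\textbf{Main obstacle: odd $d$.} When $d$ is odd, $\|z\|^{d-4}$ is not a polynomial, and homogeneous forms of odd degree satisfy $p(-z)=-p(z)$, so positivity on the sphere is impossible. The absolute value in the norm absorbs this sign, but a factor is still needed. My first attempt is to add one fresh coordinate $w$ and set
\[
p_d(z,w):=w\,(\|z\|^2+w^2)^{(d-5)/2}\,p(z)
\]
on $\mathbb{R}^{m+1}$. Write $\|z\|^2=1-w^2=:\rho^2$. On the unit sphere,
\[
|p_d|=|w|\,\rho^4\,p(z/\rho)\le |w|(1-w^2)^2B,
\]
using the bound $p\le B$ from Lemma~\ref{lem:upper_bound}. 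The right side is maximized at $w^2=1/5$, giving a value $\beta B$ with $\beta=(1/\sqrt5)(4/5)^2$. This bound is attained exactly when HQSF is feasible.

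The difficulty is that $\beta$ is irrational. I would therefore work with the squared norm, or equivalently note that the threshold is crossed iff $\max|p_d|\ge\beta B$. To keep $\alpha$ rational, I would compare $\alpha^2=\beta^2B^2=(256/3125)B^2$ on a suitably squared formulation. If the exact threshold must remain rational without squaring, the fallback is to replace the factor $w$ by a multilinear choice of an extra sphere variable. In that version one uses the non-symmetric multilinear norm~\eqref{eq:multilinear_spectral_norm} and takes $\mathcal{T}=\mathcal{T}_p\otimes e_1\otimes\cdots$, padded to order $d$. Because the multilinear norm of a tensor product is the product of the norms and $\|e_1\|_{\mathrm{op}}=1$, this gives $\|\mathcal{T}\|_{\mathrm{op}}=\|\mathcal{T}_p\|_{\mathrm{op}}$ with the rational threshold $B$ for every $d\ge4$. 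That handles all $d$ uniformly, in line with Theorem~\ref{thm:nonsymmetric}.

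**Checking the product identity.** The identity needs one verification. Since $\mathcal{T}_p$ is symmetric, Theorem~\ref{thm:symmetric_reduction} gives $\|\mathcal{T}_p\|_{\mathrm{op}}=\max|p|$. The remaining factor $\mathcal{T}(x_1,\dots,x_d)=\mathcal{T}_p(x_1,\dots,x_4)\prod_{k>4}(x_k)_1$ reaches its maximum at $x_k=e_1$. So the padded tensor has the same norm as $\mathcal{T}_p$, as claimed.
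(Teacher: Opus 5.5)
Your even-$d$ argument is correct, and it takes a different, cleaner route than the paper. The paper multiplies $p$ by a product $t_1\cdots t_{d-4}$ of fresh coordinates, so the maximum on the sphere trades $\|z\|$ off against the $t_k$. Computed correctly, that maximum is $16\,d^{-d/2}\max_{S^{m-1}}p$. The paper's own computation has two errors: its displayed YES point has squared norm $1+\frac{d-4}{d-3}$, so it is not on the sphere, and its NO-case estimate $(B-\delta)\cdot 1$ does not separate, because the YES value is itself a fixed fraction of $B$. Your padding with $(\|z\|^2)^{(d-4)/2}$ leaves $p$ unchanged on $S^{m-1}$, needs no new variables, and keeps the threshold exactly $B$.

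The genuine gap is odd $d$. Definition~\ref{def:tensor_spectral_threshold} takes a \emph{symmetric} $\mathcal{T}$ and a \emph{rational} $\alpha$, compared with $\|\mathcal{T}\|_{\mathrm{sym},\mathrm{op}}$ itself. Neither of your fixes produces such an instance.
\begin{itemize}
\item Comparing squares against $\beta^2B^2$ is a different decision problem.
\item You cannot replace $\beta B$ by a nearby rational. The NO-case value $\beta\max_{S^{m-1}}p$ can lie arbitrarily close below $\beta B$, and no polynomial-size lower bound on $B-\max_{S^{m-1}}p$ is available.
\item Your padded tensor $\mathcal{T}_p\otimes e_1\otimes\cdots\otimes e_1$ is computed correctly, but it is not symmetric. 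It therefore only proves hardness of the general multilinear variant. Hardness passes from the symmetric class to the general one (as in Theorem~\ref{thm:nonsymmetric}), not back.
\end{itemize}
The paper's proof has the same hole: $16\,d^{-d/2}$ is irrational for every odd $d$ that is not a perfect square. For $d=5$, the paper's $p_5=t_1\,p(z)$ is exactly your first attempt.

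The missing idea is to cancel the $\sqrt5$ with a linear form of norm $\sqrt5$. For odd $d\ge5$, add two coordinates $w=(w_1,w_2)$ and set
\[
P(z,w):=(w_1+2w_2)\,\bigl(\|z\|^2+\|w\|^2\bigr)^{(d-5)/2}\,p(z).
\]
Work on the unit sphere of $\mathbb{R}^{m+2}$ and put $\sigma=\|w\|$. Cauchy--Schwarz together with $p(z)=\|z\|^4p(z/\|z\|)$ gives
\[
|P|\le\sqrt5\,\sigma(1-\sigma^2)^2\max_{S^{m-1}}p\le\tfrac{16}{25}\max_{S^{m-1}}p .
\]
Equality holds at $w=(\tfrac15,\tfrac25)$ and $z=\tfrac{2}{\sqrt5}z^\star$, where $z^\star$ maximizes $p$ on $S^{m-1}$. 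Hence $\max|P|=\tfrac{16}{25}\max_{S^{m-1}}p$. The rational threshold $\alpha=\tfrac{16}{25}B$ then works for the symmetric tensor $\mathcal{T}_P\in\mathrm{Sym}^d(\mathbb{R}^{m+2})$, which has polynomial size for fixed $d$.
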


\begin{proof}
Let $p(z)$ be the quartic polynomial constructed in Section~\ref{sec:hqsf_to_tensor}, and consider the sphere $S^{m-1}$.

Define a new variable vector
\[
y = (z, t_1, \dots, t_{d-4}) \in \mathbb{R}^{m + (d-4)},
\]
and impose the constraint
\begin{equation}
\|y\|^2 = \|z\|^2 + \sum_{k=1}^{d-4} t_k^2 = 1.
\label{eq:extended_sphere}
\end{equation}

Now define the homogeneous degree-$d$ polynomial
\begin{equation}
p_d(y)
:=
p(z) \cdot \prod_{k=1}^{d-4} t_k.
\label{eq:pd_definition}
\end{equation}

We analyze the maximum of $|p_d(y)|$ over the sphere.

First, observe that
\[
|p_d(y)| = |p(z)| \cdot \prod_{k=1}^{d-4} |t_k|.
\]

Under the constraint \eqref{eq:extended_sphere}, we have
\[
|t_k| \le 1,
\quad \text{and} \quad
\prod_{k=1}^{d-4} |t_k| \le \left(\frac{1}{d-4}\sum_{k=1}^{d-4} t_k^2 \right)^{\frac{d-4}{2}}.
\]

Hence the product is maximized when all $t_k$ are equal in magnitude. In particular, the maximum value of $\prod_k |t_k|$ is strictly less than $1$ unless all $t_k^2 = \frac{1}{d-4}(1-\|z\|^2)$.

Now consider two cases:

\medskip
\noindent
\textbf{(i) YES instance.}
If there exists $z^\star \in S^{m-1}$ such that $p(z^\star)=B$, then set
\[
t_1 = \cdots = t_{d-4} = \frac{1}{\sqrt{d-4}},
\quad z = \sqrt{\frac{d-4}{d-3}}\, z^\star.
\]
This ensures $\|y\|=1$. Substituting,
\[
|p_d(y)| = B \cdot \left(\frac{1}{\sqrt{d-4}}\right)^{d-4} \cdot \left(\frac{d-4}{d-3}\right)^2.
\]
Thus the maximum exceeds a fixed threshold $\alpha_d > 0$.

\medskip
\noindent
\textbf{(ii) NO instance.}
If no $z$ satisfies $p(z)=B$, then $p(z) \le B - \delta$ for some $\delta > 0$. Hence
\[
|p_d(y)| \le (B - \delta) \cdot \max \prod |t_k|.
\]
Since $\prod |t_k| \le 1$, the maximum is strictly less than the YES-case value.

\medskip

Thus there exists a threshold $\alpha_d$ separating YES and NO instances. The construction is polynomial-time, and $p_d$ is homogeneous of degree $d$.

Finally, by Lemma~\ref{lem:tensor_polynomial_equivalence}, $p_d$ corresponds to a symmetric order-$d$ tensor, completing the reduction.
\end{proof}

\subsection{Multilinear formulation}
\label{subsec:multilinear}

\begin{theorem}
\label{thm:multilinear}
Multilinear Tensor Spectral Threshold is $\exists\mathbb{R}$-hard.
\end{theorem}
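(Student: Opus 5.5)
The plan is to reduce from the symmetric order-$4$ problem, which is $\exists\mathbb{R}$-hard by Theorem~\ref{thm:main_hardness} and Corollary~\ref{cor:tensor_norm_threshold}. The \emph{Multilinear Tensor Spectral Threshold} problem asks, for a (possibly non-symmetric) $\mathcal{T}$ and rational $\alpha$, whether $\|\mathcal{T}\|_{\mathrm{op}}\ge\alpha$, with $\|\cdot\|_{\mathrm{op}}$ the multilinear norm of Definition~\ref{def:multilinear_spectral_norm}.

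The reduction is the identity map. Given an HQSF instance, build $p$ as in \eqref{eq:quartic_construction} and the symmetric tensor $\mathcal{T}_p$ as in \eqref{eq:tensorization}, and output the pair $(\mathcal{T}_p,B)$. The instance is read as multilinear input, with all $m^4$ entries of $\mathcal{T}_p$ listed explicitly.

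Correctness then comes in three steps.

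\textbf{Step 1.} By Theorem~\ref{thm:symmetric_reduction} (Banach), $\|\mathcal{T}_p\|_{\mathrm{op}}=\|\mathcal{T}_p\|_{\mathrm{sym},\mathrm{op}}$.

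\textbf{Step 2.} By Lemma~\ref{lem:tensor_norm_equivalence} and Lemma~\ref{lem:absolute_value}, this common value equals $\max_{\|z\|=1}p(z)$.

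\textbf{Step 3.} By Theorem~\ref{thm:quartic_equivalence} together with Lemma~\ref{lem:upper_bound}, we have $\max_{\|z\|=1}p(z)\ge B$ if and only if the HQSF instance is feasible.

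Chaining the three steps gives $\|\mathcal{T}_p\|_{\mathrm{op}}\ge B$ if and only if the HQSF instance is feasible. Since HQSF is $\exists\mathbb{R}$-hard by Corollary~\ref{cor:hqsf_etr_hard}, hardness transfers.

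The only points needing care are the encoding and polynomial time. Writing out $\mathcal{T}_p$ entrywise costs $O(m^4)$ rationals. Each entry is obtained from the coefficients of $p$ by polarization: divide the monomial coefficient by the number of distinct index permutations. The coefficients of $p$ are sums of products of pairs of entries of the $Q_i$, plus the coefficients of $B\|z\|^4$. Hence the bit-lengths stay polynomial, and $B=1+\sum_i\|Q_i\|_F^2$ is a rational of polynomial size.

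I expect the main obstacle to be Step 1. Everything rests on the equality between the multilinear and symmetric norms, which the paper imports from Banach's theorem rather than proves. If a self-contained argument were wanted, only the inequality $\|\mathcal{T}_p\|_{\mathrm{op}}\le B$ in the NO direction could fail, since the YES direction just takes $x_1=\dots=x_4=z$.

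Here I would argue directly. Write $p(z)=B\|z\|^4-\sum_i q_i(z)^2$, so $\mathcal{T}_p=B\,\mathrm{Sym}(I\otimes I)-\sum_i\mathrm{Sym}(Q_i\otimes Q_i)$. One could try to bound $|\mathcal{T}_p(x_1,\dots,x_4)|$ via Cauchy--Schwarz on the PSD-like structure. However, $B\,\mathrm{Sym}(I\otimes I)$ alone already has multilinear norm $B$ (attained on the diagonal), and a direct bound on the subtracted part does not obviously stay below $B$ off the diagonal. So I would fall back on Banach's theorem as cited, which handles exactly this case for symmetric tensors over real Hilbert spaces.
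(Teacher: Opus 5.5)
Your proposal is correct and is essentially the paper's own argument. Banach's theorem (Theorem~\ref{thm:symmetric_reduction}) makes the symmetric instance $(\mathcal{T}_p,B)$ from Section~\ref{sec:hqsf_to_tensor} a valid multilinear instance with the same norm, so the identity map transfers hardness; your encoding and polarization bookkeeping fills in details the paper leaves implicit. One small correction to your closing discussion: in the NO direction, Banach must supply the strict bound $\|\mathcal{T}_p\|_{\mathrm{op}}<B$ (equivalently $\|\mathcal{T}_p\|_{\mathrm{op}}\le\max_{\|z\|=1}p(z)$), not merely $\|\mathcal{T}_p\|_{\mathrm{op}}\le B$, while the YES direction follows from the trivial inequality $\|\mathcal{T}_p\|_{\mathrm{op}}\ge\|\mathcal{T}_p\|_{\mathrm{sym},\mathrm{op}}$.
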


\begin{proof}
For symmetric tensors, Banach's theorem implies
\[
\max_{\|x_1\|=\cdots=\|x_d\|=1}
|\mathcal{T}(x_1,\dots,x_d)|
=
\max_{\|x\|=1}
|\mathcal{T}(x,\dots,x)|.
\]
Thus every symmetric instance constructed in Section~\ref{sec:hqsf_to_tensor} is also a valid instance of the multilinear formulation, and hardness follows.
\end{proof}

\subsection{Equality formulation}
\label{subsec:equality}

\begin{theorem}
\label{thm:equality}
Deciding whether
\[
\|\mathcal{T}\|_{\mathrm{sym},\mathrm{op}} = \alpha
\]
is $\exists\mathbb{R}$-hard.
\end{theorem}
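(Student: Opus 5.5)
The plan is to reuse the construction of Section~\ref{sec:hqsf_to_tensor} verbatim and observe that it already yields an equality instance. Given an instance of $\textnormal{\textsc{HQSF}}$ with matrices $Q_1,\dots,Q_r$, we form $C$, $B=C+1$ and the quartic $p$ of \eqref{eq:quartic_construction}, and take the symmetric tensor $\mathcal{T}_p$ of \eqref{eq:tensorization} with threshold $\alpha:=B$. This map is the polynomial-time map of Theorem~\ref{thm:hqsf_to_tensor}, so only correctness has to be checked.

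The key point is that the upper bound is automatic. By Lemma~\ref{lem:upper_bound}, $p(z)\le B$ on $S^{m-1}$. By Corollary~\ref{cor:positivity} and Lemma~\ref{lem:absolute_value}, $\max_{\|z\|=1}|p(z)|=\max_{\|z\|=1}p(z)$. Lemma~\ref{lem:tensor_norm_equivalence} then gives $\|\mathcal{T}_p\|_{\mathrm{sym},\mathrm{op}}\le B$ for every instance. Hence $\|\mathcal{T}_p\|_{\mathrm{sym},\mathrm{op}}\ge B$ and $\|\mathcal{T}_p\|_{\mathrm{sym},\mathrm{op}}=B$ are the same condition.

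Theorem~\ref{thm:quartic_equivalence} then shows that equality holds exactly when some $z\in S^{m-1}$ satisfies $q_i(z)=0$ for all $i$. One detail needs care: the maximum is attained, by compactness (Lemma~\ref{lem:sphere_compact}, as in Proposition~\ref{prop:attainment_trivial}). So if the value equals $B$, some sphere point achieves $p(z)=B$, and Lemma~\ref{lem:equality_condition} applies to that point. Combining this with Corollary~\ref{cor:hqsf_etr_hard} gives $\exists\mathbb{R}$-hardness of the equality version.

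I do not expect a real obstacle. The only subtle step is making sure the reduction uses a certified a priori upper bound, so that ``$\ge$'' collapses to ``$=$'' without any extra construction. The bound $p\le B$ holds identically because the subtracted terms are squares. By contrast, a generic threshold instance would require a separate argument to rule out values strictly above $\alpha$.
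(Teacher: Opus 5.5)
Your proposal is correct and follows essentially the same route as the paper. It reuses the instance $(\mathcal{T}_p,B)$ from Section~\ref{sec:hqsf_to_tensor}, uses the fact that $\|\mathcal{T}_p\|_{\mathrm{sym},\mathrm{op}}\le B$ always holds, so equality with $B$ occurs exactly in the feasible case, and concludes via Corollary~\ref{cor:hqsf_etr_hard}; it is in fact slightly more explicit than the paper's one-line proof, since you spell out why ``$\ge B$'' collapses to ``$=B$'' and where attainment of the maximum (compactness) is used.
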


\begin{proof}
From Section~\ref{sec:hqsf_to_tensor}, the constructed instance satisfies
\[
\|\mathcal{T}_p\|_{\mathrm{sym},\mathrm{op}} = B
\quad \text{iff feasibility holds,}
\]
and strictly less than $B$ otherwise. Hence equality testing is equivalent to feasibility, and hardness follows.
\end{proof}

\subsection{Robustness}
\label{subsec:robustness}

\begin{remark}
The reduction is gap-preserving: there exists a constant $\delta > 0$ such that
\[
\max |p(z)| =
\begin{cases}
B & \text{YES instance},\\
\le B - \delta & \text{NO instance}.
\end{cases}
\]
This gap propagates through all constructions in this section, ensuring numerical stability and eliminating degeneracy.
\end{remark}

\subsection{Summary}

The $\exists\mathbb{R}$-hardness of Tensor Spectral Threshold:

\begin{itemize}
    \item does not depend on symmetry,
    \item extends to all fixed orders $d \ge 4$,
    \item applies to multilinear formulations,
    \item holds for both threshold and equality variants.
\end{itemize}

\section{Conclusion and Future Directions}
\label{sec:conclusion}

In this paper, we established that the Tensor Spectral Threshold problem is $\exists\mathbb{R}$-hard. The result is obtained through a fully algebraic reduction chain, starting from a bounded quartic equality feasibility normal form and proceeding via homogeneous quadratic feasibility on the unit sphere.

\subsection{Summary of contributions}

The main contributions of the paper are as follows:

\begin{itemize}
    \item We identified Tensor Spectral Threshold as the correct decision formulation for tensor spectral norm, distinguishing it from the trivial attainment formulation.
    
    \item We constructed a direct reduction from homogeneous quadratic feasibility on the sphere to a quartic polynomial maximization problem, avoiding optimization-based arguments such as KKT conditions or duality.
    
    \item We established that the hardness arises from the algebraic structure of polynomial feasibility embedded in the value of a quartic form.
    
    \item We showed that the result extends naturally to non-symmetric tensors, higher-order tensors, and multilinear formulations.
\end{itemize}

\subsection{Conceptual implications}

The reduction developed in this paper demonstrates that tensor spectral norm computation is not merely combinatorially hard, but intrinsically linked to real algebraic feasibility.

In particular:

\begin{itemize}
    \item The hardness is driven by the ability of quartic forms to encode systems of quadratic equations.
    
    \item The sphere constraint plays a crucial role in eliminating trivial solutions and converting feasibility into an optimization threshold.
    
    \item The result provides a natural example of an $\exists\mathbb{R}$-hard problem arising in multilinear algebra without requiring geometric gadgets or combinatorial encodings.
\end{itemize}

\subsection{Limitations}

We highlight the following limitations of the current work:

\begin{itemize}
    \item We establish $\exists\mathbb{R}$-hardness but do not prove $\exists\mathbb{R}$-completeness.
    
    \item The reduction relies on quartic structure; lower-degree formulations are not covered.
    
    \item The analysis is restricted to exact decision problems and does not address approximation or numerical complexity.
\end{itemize}

\subsection{Future directions}

Several directions arise naturally from this work:

\begin{itemize}
    \item \textbf{$\exists\mathbb{R}$-completeness:} Establishing whether Tensor Spectral Threshold lies in $\exists\mathbb{R}$ would close the complexity characterization.
    
    \item \textbf{Degree reduction:} Investigating whether similar hardness results can be obtained for lower-degree polynomial formulations.
    
    \item \textbf{Approximation complexity:} Understanding the hardness of approximating tensor spectral norm under various regimes.
    
    \item \textbf{Geometric interpretations:} Relating the reduction to geometric problems such as tensor rank, secant varieties, and algebraic optimization landscapes.
    
    \item \textbf{Algorithmic implications:} Identifying restricted tensor classes for which the spectral threshold problem becomes tractable.
\end{itemize}

\subsection{Final remark}

The main message of this paper is that tensor spectral norm is not merely difficult due to non-convexity, but fundamentally hard because it encodes real algebraic feasibility at a structural level. This places tensor optimization problems in a broader complexity-theoretic framework that goes beyond classical $\mathbf{NP}$-hardness.

\end{document}